\documentclass[a4paper,10pt]{article}
\usepackage[utf8]{inputenc}
 \usepackage{amsmath,amsfonts,amssymb}
 \usepackage{amsthm}
 \usepackage{mathtools}
 \usepackage{amsopn}
 \usepackage{tabularx,lipsum,environ}
 \usepackage{paralist}
 \usepackage{microtype}
 \usepackage{authblk}
 \usepackage{hyperref}
 \usepackage{fancyhdr}
 \usepackage{rotating}
 \usepackage{overpic}
 \usepackage{ucs}
 \usepackage{enumerate}
 \usepackage{graphicx}
 \usepackage{booktabs}
 \usepackage{varwidth}
 \usepackage{subfig}

 \bibliographystyle{plainurl}

\usepackage[color=green!20]{todonotes}

\usepackage{algorithm}
\usepackage[noend]{algpseudocode}

\usepackage{color}
\definecolor{darkred}{rgb}{0.8,0,0}

\newtheorem{observation}{Observation}
\newtheorem{myclaim}{Claim}

\newcommand{\figref}[1]{Figure~\ref{#1}}

\usepackage{comment}
\usepackage{tikz}
\tikzstyle{vertex}=[circle, draw, inner sep=1pt, minimum width=6pt]
\usetikzlibrary{decorations,decorations.pathmorphing,decorations.pathreplacing,fit,arrows,calc}


\newcommand{\vc}{\mathsf{vc}}
\newcommand{\nd}{\mathsf{nd}}
\newcommand{\mw}{\mathsf{mw}}
\newcommand{\cw}{\mathsf{cw}}
\newcommand{\tw}{\mathsf{tw}}
\newcommand{\td}{\mathsf{td}}
\newcommand{\pw}{\mathsf{pw}}

\newcommand{\safe}{\mathsf{s}}
\newcommand{\csafe}{\mathsf{cs}}

\newcommand{\vi}{\mathsf{vi}}

\renewcommand{\mid}{:}

\newcommand{\DP}{\mathtt{dp}} 
\newcommand{\num}{\mathtt{n}}             
\newcommand{\bnum}{\overline{\mathtt{n}}} 
\newcommand{\smllst}{\mathtt{s}} 
\newcommand{\lrgst}{\mathtt{l}}  
\newcommand{\mindiff}{\mathtt{d}}  
\newcommand{\asmllst}{\mathtt{s}^{\mathtt{a}}} 
\newcommand{\alrgst}{\mathtt{l}^{\mathtt{a}}}  

\newcommand{\true}{\mathsf{true}}
\newcommand{\false}{\mathsf{false}}

\usepackage{tabularx,environ}
\usepackage{mathtools}

\renewcommand\leq\leqslant
\renewcommand\geq\geqslant
\renewcommand\le\leqslant
\renewcommand\ge\geqslant

\makeatletter
\newcommand{\problemtitle}[1]{\gdef\@problemtitle{#1}}
\newcommand{\probleminput}[1]{\gdef\@probleminput{#1}}
\newcommand{\problemquestion}[1]{\gdef\@problemquestion{#1}}
\NewEnviron{myproblem}{
  \problemtitle{}\probleminput{}\problemquestion{}
  \BODY
  \par\addvspace{.5\baselineskip}
  \noindent \normalsize
  \begin{tabularx}{\textwidth}{@{\hspace{\parindent}} l X c}
    \multicolumn{2}{@{\hspace{\parindent}}l}{\normalsize\@problemtitle} \\
    \normalsize \textbf{Input:} & \normalsize \@probleminput \\
    \normalsize \textbf{Question:} & \normalsize \@problemquestion
  \end{tabularx}
  \par\addvspace{.5\baselineskip}
}
\makeatother


 
\title{Parameterized Complexity of Safe Set\thanks{Partially supported by JSPS and MAEDI under the Japan-France Integrated Action Program (SAKURA) Project GRAPA 38593YJ,
and by JSPS/MEXT KAKENHI Grant Numbers JP24106004, JP17H01698, JP18K11157, JP18K11168, JP18K11169, JP18H04091, 18H06469.}}


\author[1]{R\'{e}my Belmonte}
\author[2]{Tesshu Hanaka}
\author[3]{Ioannis Katsikarelis}
\author[3]{Michael Lampis}
\author[4]{Hirotaka Ono}
\author[5]{Yota Otachi}

\affil[1]{The University of Electro-Communications, Chofu, Tokyo, 182-8585, Japan}
\affil[2]{Chuo University, Bunkyo-ku, Tokyo, 112-8551, Japan}
\affil[3]{Universit\'{e} Paris-Dauphine, PSL University, CNRS, LAMSADE, Paris, France}
\affil[4]{Nagoya University, Nagoya, 464-8601, Japan}
\affil[5]{Kumamoto University, Kumamoto, 860-8555, Japan}

\theoremstyle{plain}
\newtheorem{theorem}{Theorem}
\newtheorem{lemma}[theorem]{Lemma}
\newtheorem{corollary}[theorem]{Corollary}

\newtheorem{proposition}[theorem]{Proposition}

\newlength{\defbaselineskip}
\setlength{\defbaselineskip}{\baselineskip}
\newcommand{\setlinespacing}[2]%
           {\setlength{\baselineskip}{#1 \defbaselineskip}}

\newlength{\btw}
\setlength{\btw}{\textwidth} \addtolength{\btw}{0.2cm}
\newlength{\stw}
\setlength{\stw}{\textwidth} \addtolength{\stw}{-0.4cm}

\setlength{\textheight}{22cm}
\setlength{\textwidth}{15cm}
\setlength{\oddsidemargin}{0.6cm}
\setlength{\evensidemargin}{2cm}
\setlength{\topmargin}{0cm}

\newsavebox\tmpbox 

\date{}

\begin{document}

\maketitle

\begin{abstract}
In this paper we study the problem of finding a small safe set $S$ in a graph $G$,
i.e.\ a non-empty set of vertices such that no connected component of $G[S]$
is adjacent to a larger component in $G - S$.
We enhance our understanding of the problem from the viewpoint of parameterized complexity by showing that
(1) the problem is W[2]-hard when parameterized by the pathwidth $\pw$ and cannot be solved in time $n^{o(\pw)}$ unless the ETH is false,
(2) it admits no polynomial kernel parameterized by the vertex cover number $\vc$ unless $\mathrm{PH} = \Sigma^{\mathrm{p}}_{3}$, but
(3) it is fixed-parameter tractable (FPT) when parameterized by the neighborhood diversity $\nd$, and
(4) it can be solved in time $n^{f(\cw)}$ for some double exponential function $f$ where $\cw$ is the clique-width.
We also present (5) a faster FPT algorithm when parameterized by solution size.
\end{abstract}


\section{Introduction}

Let $G = (V,E)$ be a graph.
For a vertex set $S \subseteq V(G)$, we denote by $G[S]$ the subgraph of $G$ induced by $S$,
and by $G - S$ the subgraph induced by $V \setminus S$.
If $G[S]$ is connected, we also say that $S$ is connected.
A vertex set $C \subseteq V$ is a \emph{component} of $G$ if $C$ is an inclusion-wise maximal connected set.
Two vertex sets $A, B \subseteq V$ are \emph{adjacent}
if there is an edge $\{a,b\} \in E$ such that $a \in A$ and $b \in B$.
Now, a non-empty vertex set $S \subseteq V$ of a graph $G = (V,E)$ is a \emph{safe set}
if no connected component $C$ of $G[S]$
has an adjacent connected component $D$ of $G -S$ with $|C| < |D|$.
A safe set $S$ of $G$ is a \emph{connected safe set} if $G[S]$ is connected.

The \emph{safe number} $\safe(G)$ of $G$ is the size of a minimum safe set of $G$,
and the \emph{connected safe number} $\csafe(G)$ of $G$ is the size of a minimum connected safe set of $G$.
It is known~\cite{FMS16} that $\safe(G) \le \csafe(G) \le 2 \cdot \safe(G) - 1$.

Fujita, MacGillivray, and Sakuma~\cite{FMS16} introduced the concept of safe sets
motivated by a facility location problem that can be used to design a safe evacuation plan.
Subsequently, Bapat et al.~\cite{BapatFLMMST_weighted} observed that a safe set can control the consensus of the underlying network
with a majority of each part of the subnetwork induced by a component in the safe set and an adjacent component in the complement, thus the minimum size of a safe set can be used as a vulnerability measure of the network.
That is, contrary to its name, having a small safe set could be unsafe for a network.
Both the combinatorial and algorithmic aspects of the safe set problem have already been extensively studied~\cite{AguedaCFLMMMNOS_structural,FujitaF_preprint,EhardR17arxiv_approx,FujitaJPSarxiv_cycle}.

In this paper, we study the problem of finding a small safe set mainly from the parameterized-complexity point of view.
We show a number of both tractability and intractability results,
that highlight the difference in complexity that this parameter exhibits compared to similarly defined vulnerability parameters.


\subsection*{Our results}
Our main results are the following (see also \figref{fig:width-parameters}).
\begin{enumerate}
  \item Both problems are W[2]-hard parameterized by the pathwidth $\pw$
  and cannot be solved in time $n^{o(\pw)}$ unless the Exponential Time Hypothesis (ETH) fails,
  where $n$ is the number of vertices.
  
  \item They do not admit kernels of polynomial size when parameterized by vertex cover number
  even for connected graphs unless $\mathrm{PH} = \Sigma^{\mathrm{p}}_{3}$.

  \item Both problems are fixed-parameter tractable (FPT) when parameterized by neighborhood diversity.

  \item Both problems can be solved in XP-time when parameterized by clique-width.

  \item Both problems can be solved in $O^*(k^{O(k)})$ time\footnote{%
    The $O^*(\cdot)$ notation omits the polynomial dependency on the input size.}
  when parameterized by the solution size $k$.
\end{enumerate}

The W[2]-hardness parameterized by pathwidth complements the known FPT result when parameterized
by the solution size~\cite{AguedaCFLMMMNOS_structural}, since for every graph the size of the solution is at least half of the graph's pathwidth (see Section~\ref{sec:prel}).
The $n^{o(\pw)}$-time lower bound is tight since there is an $n^{O(\tw)}$-time algorithm~\cite{AguedaCFLMMMNOS_structural},
where $\tw$ is the treewidth.
The second result also implies that there is no polynomial kernel parameterized by solution size,
as the vertex cover number is an upper bound on the size of the solution.
The third result marks the first FPT algorithm by a parameter that is incomparable to the solution size.
The fourth result implies XP-time solvability for all the parameters mentioned in this paper
and extends the result for treewidth from~\cite{AguedaCFLMMMNOS_structural}.
The fifth result improves the known algorithm~\cite{AguedaCFLMMMNOS_structural} that uses Courcelle's theorem.

\begin{figure}[bth]
  \centering
  \includegraphics[scale=0.8]{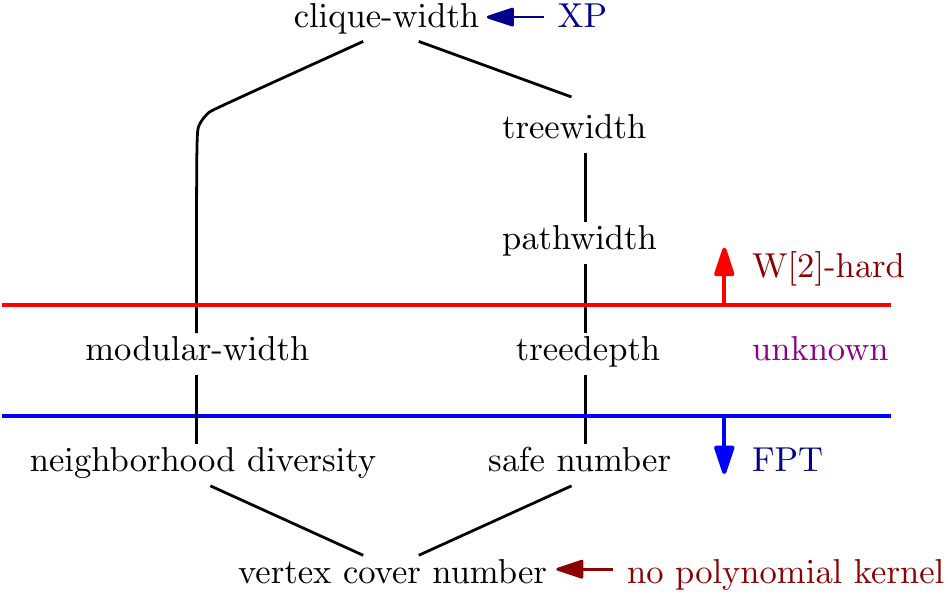}
  \caption{Graph parameters and an overview of the parameterized complexity landscape for SS and CSS.
    Connections between two parameters imply the existence of a function in the one above (being in this sense more general) that lower bounds the one below.}
  \label{fig:width-parameters}
\end{figure}


\subsection*{Previous work}
In the first paper on this topic, Fujita et al.~\cite{FMS16} showed that the problems are NP-complete in general.
Their hardness proof implies that the parameters are hard to approximate within a factor of 1.3606
(\cite{AguedaCFLMMMNOS_structural}).
They also showed that a minimum connected safe set in a tree can be found in linear time.

Bapat et al.~\cite{BapatFLMMST_weighted} considered the problems on vertex-weighted graphs,
where the problems are naturally generalized.
They showed that these are weakly NP-complete even for weighted stars
(thus for all parameters generalizing vertex cover number and graph classes like interval and split graphs).
On the other hand, they showed that the problems can be solved in $O(n^{3})$ time for weighted paths.
%
Ehard and Rautenbach~\cite{EhardR17arxiv_approx} presented a PTAS for the connected safe number of a weighted tree.
Fujita et al.~\cite{FujitaJPSarxiv_cycle} showed  among other results
that the problems can be solved in linear time for weighted cycles.

\'{A}gueda et al.~\cite{AguedaCFLMMMNOS_structural} studied their unweighted versions.
They presented an XP algorithm for graphs of bounded treewidth and
showed that the problems can be solved in polynomial time for interval graphs,
while they are NP-complete for split and bipartite planar graphs of maximum degree at most~7.
Observing that the treewidth of a graph is bounded by a function of its safe number,
they also showed that the problems are FPT parameterized by solution size.


\section{Definitions and Preliminaries}\label{sec:prel}

The problems studied in this paper are formalized as follows:
 \begin{myproblem}
   \problemtitle{\textsc{Safe Set} (SS)}
   \probleminput{A graph $G = (V,E)$ and an integer $k$.}
   \problemquestion{Is there a safe set $S \subseteq V$ of size at most $k$?}
 \end{myproblem}
 \begin{myproblem}
   \problemtitle{\textsc{Connected Safe Set} (CSS)}
   \probleminput{A graph $G = (V,E)$ and an integer $k$.}
   \problemquestion{Is there a connected safe set $S \subseteq V$ of size at most $k$?}
 \end{myproblem}
 \noindent In our positive results, we also study the optimization versions of both problems.
 
A set $K \subseteq V$ is a \emph{dominating set} of $G = (V,E)$
 if each vertex in $G$ is either included in $K$ or has a neighbor in $K$;
 that is, $N[v] \cap K \ne \emptyset$.
 \textsc{Dominating Set} asks to for a dominating set of size at most $k$
 and is known to be W[2]-complete when parameterized by the solution size $k$~\cite{DowneyF95}. A related problem is the following:
 
 \begin{myproblem}
   \problemtitle{\textsc{Red-Blue Dominating Set} (RBDS)}
   \probleminput{A bigraph $G = (R,B;E)$ and an integer $k$.}
   \problemquestion{Is there $D \subseteq B$ of size at most $k$ that dominates all vertices in $R$?}
 \end{myproblem}

We assume that the reader is familiar with the concepts relating to fixed-parameter tractability.
See~\cite{CyganFKLMPPS15} (and references therein) for the definitions of relevant notions in parameterized complexity theory.
We let $N[v]$ denote the \emph{closed neighborhood} of vertex $v$, i.e.\ the set containing $v$ and all vertices adjacent to it.
Furthermore, for a positive integer $k$, we denote the set $\{1,\dots,k\}$ by $[k]$.


 \subsection*{Graph parameters}\label{ssec:graph-parameters}
We recall relationships among some graph parameters used in this paper,
and give a map of the parameters with the results.
The graph parameters we explicitly use in this paper are 
vertex cover number $\vc$, pathwidth $\pw$, neighborhood diversity $\nd$, and clique-width $\cw$.
They are situated in the hierarchy of well-studied graph parameters, along with safe number $\safe$,
as depicted in \figref{fig:width-parameters}.

The treewidth $\tw(G)$, pathwidth $\pw(G)$, treedepth $\td(G)$, and vertex cover number $\vc(G)$ of a graph $G$
can be defined as the minimum of the maximum clique-size ($-1$ for $\tw(G)$, $\pw(G)$, and $\vc(G)$)
among all supergraphs of $G$ that are of type chordal, interval, trivially perfect and threshold, respectively.
This gives us $\tw(G) \le \pw(G) \le \td(G) - 1 \le \vc(G)$ for every graph $G$.
One can easily see that $\safe(G) \le \vc(G)$ and $\td(G) \le 2 \safe(G)$. (See also the discussion in \cite{AguedaCFLMMMNOS_structural}.)
This justifies the hierarchical relationships among them in \figref{fig:width-parameters}.

Although modular-width $\mw(G)$ and neighborhood diversity $\nd(G)$ are
incomparable to most of the parameters mentioned above,
all these parameters are generalized by clique-width, while the vertex cover number is their specialization.

More formally, a \emph{path decomposition} of a graph $G = (V,E)$
 is a sequence $(X_{1}, \dots, X_{r})$ of subsets (called \emph{bags}) of $V$ such that:
 \begin{enumerate}
 \item $\bigcup_{1 \le i \le r} X_{i} = V$, 
 \item for each $e \in E$, there exists $X_{i}$ such that $e \subseteq X_{i}$, and 
 \item each $v \in V$ appears in consecutive sets in the sequence.
 \end{enumerate}
 The \emph{width} of a path decomposition $(X_{1}, \dots, X_{r})$ is $\max_{1 \le i \le r} |X_{i}|$.
 The \emph{pathwidth} $\pw(G)$ of $G$ is the minimum width over all path decompositions of $G$.

 Next, the clique-width of a graph measures the simplicity of the graph
 as the number of labels required to construct the graph~\cite{CourcelleO00}.
 In a vertex-labeled graph, an \emph{$i$-vertex} is a vertex of label $i$ .
 A \emph{$c$-expression} is a rooted binary tree such that:
 \begin{itemize}
   \setlength{\itemsep}{0pt}
   \item each leaf has label $\circ_{i}$ for some $i \in [c]$,
   \item each non-leaf node with two children has label $\cup$, and
   \item each non-leaf node with exactly one child has label $\rho_{i,j}$ or $\eta_{i,j}$ ($i, j \in [c]$, $i \ne j$).
 \end{itemize}
 Each node in a $c$-expression represents a vertex-labeled graph as follows:
 \begin{itemize}
   \setlength{\itemsep}{0pt}
   \item a $\circ_{i}$-node represents a single-vertex graph with one $i$-vertex;
   \item a $\cup$-node represents the disjoint union of the labeled graphs represented by its children;
   \item a $\rho_{i,j}$-node represents the labeled graph obtained from the one represented by its child by replacing
   the labels of all $i$-vertices with $j$;
   \item an $\eta_{i,j}$-node represents the labeled graph obtained from the one represented by its child
   by adding all possible edges between the $i$-vertices and the $j$-vertices.
 \end{itemize}
 
 A $c$-expression gives the graph represented by its root.
 The \emph{clique-width} $\cw(G)$ of $G$
 is the minimum integer $c$ such that some $c$-expression represents a graph isomorphic to $G$.
 It is known that for any constant $c$,
 one can compute a $(2^{c+1}-1)$-expression of a graph of clique-width $c$ in $O(n^{3})$ time~\cite{HlinenyO08,OumS06,Oum08}.
 
 A $c$-expression of a graph is \emph{irredundant} if for each edge $\{u,v\}$,
 there is exactly one node $\eta_{i,j}$ that adds the edge between $u$ and $v$.
 A $c$-expression of a graph can be transformed into an irredundant one with $O(n)$ 
 nodes in linear time~\cite{CourcelleO00}.
 In what follows, we assume that $c$-expressions are irredundant.
 
 
 \subsection*{Some observations}
 Here we list some simple observations that are not directly related to our main results yet may still be of interest.
 \begin{observation}
   Given a graph $G$,
   the safe number and the connected safe number can be approximated in polynomial time
   within a factor of $\safe(G)+1$.
 \end{observation}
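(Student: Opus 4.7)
My plan is to prove this observation by constructing and analyzing a polynomial-time approximation algorithm whose output is guaranteed to be a safe set of size at most $\safe(G)\cdot(\safe(G)+1)$. The starting point is a simple \emph{verification} routine: given any candidate $S \subseteq V$, we compute the connected components of both $G[S]$ and $G - S$ in $O(n+m)$ time and check, for every pair of adjacent components, that the one inside $G[S]$ is at least as large as the one outside. This lets us treat candidate generation as a black box with polynomial-time feasibility testing.

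For candidate generation, I would iterate over every starting vertex $v \in V$ and grow a connected candidate $S_v$ greedily. Starting from $S_v \leftarrow \{v\}$, while some component $D$ of $G - S_v$ adjacent to a component $C$ of $G[S_v]$ satisfies $|D| > |C|$, we add to $S_v$ a minimum-size connected set of vertices taken from $D$ and adjacent to $C$ that rebalances this pair. Each expansion step adds at least one vertex, so the loop terminates within $n$ iterations, and $G[S_v]$ remains connected throughout; hence the same candidate $S_v$ also serves for the connected safe set problem and yields an identical bound for $\csafe(G)$. The algorithm finally returns the smallest $S_v$ found.

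The main obstacle, and the step that requires the most care, is to argue that for at least one choice of starting vertex $v$ the greedy expansion halts with $|S_v| \le \safe(G)\cdot(\safe(G)+1)$. The plan is to fix an optimal safe set $S^{*}$ with $|S^{*}| = s = \safe(G)$ and take $v \in S^{*}$, then use an amortized argument over the (at most $s$) components of $G[S^{*}]$: each such component can be charged for at most $s$ extra vertices absorbed during expansion, since by optimality of $S^{*}$ every component of $G - S_v$ encountered is contained in a component of $G - S^{*}$ of size at most $s$. This gives $|S_v| \le s + s\cdot s = s(s+1)$. Because verification is polynomial and we try only $n$ starting vertices, the overall runtime is polynomial, and the smallest returned $S_v$ has size at most $(\safe(G)+1)\cdot \safe(G)$, i.e., the approximation factor is $\safe(G)+1$, as required.
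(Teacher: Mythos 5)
Your algorithm's output, if the loop terminates, is indeed a (connected) safe set, but the size analysis has a genuine gap, and it sits exactly where the paper's proof does its real work. Your charging argument rests on the claim that ``every component of $G - S_v$ encountered is contained in a component of $G - S^{*}$ of size at most $s$.'' This is false: a component of $G - S_v$ may contain vertices of $S^{*}$ and hence span many components of $G - S^{*}$ (for instance, immediately after the initialization $S_v=\{v\}$ the relevant component of $G-S_v$ is typically the entire rest of the graph). Without that containment there is no bound on the size of the ``minimum connected rebalancing set'' you add in one step: if $C$ is the current component and $D$ an adjacent component with $|D|>|C|$, rebalancing can force you to absorb $\Omega(|D|-|C|)$ vertices at once (e.g.\ when $D$ is a long path hanging off $C$), and nothing ties the number or the size of these steps to $s$. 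The amortization ``each component of $G[S^{*}]$ is charged at most $s$ extra vertices'' is asserted but never derived from the algorithm's behaviour. A secondary issue is that computing a minimum-size connected rebalancing set is itself a nontrivial optimization problem; you would need to show it is polynomial or fall back to an arbitrary choice, which weakens the size analysis further.

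The paper closes exactly this gap with two devices you are missing. First, it \emph{guesses} the value $s=\safe(G)$ (trying all $s\le |V(G)|$), and then always adds a \emph{connected chunk of exactly $s+1$ vertices} taken from an oversized component of $G-S$ and touching $S$. Second, it replaces your containment claim by an intersection argument: any connected set $X$ of $s+1$ vertices must meet every safe set $S^{*}$ of size $s$, since otherwise $X$ lies inside a component of $G-S^{*}$ of size at least $s+1$, which (as $G$ is connected) is adjacent to a component of $G[S^{*}]$ of size at most $s$, contradicting safety. Because the chunks are pairwise disjoint, at most $s$ of them are ever added, giving $|S|\le s(s+1)$, and the stopping condition ``every component of $G-S$ has size at most $s\le|S|$'' together with the connectivity of $G[S]$ makes $S$ safe. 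To repair your proof you should import both devices: fix the chunk size to $s+1$ via the guess, and charge each chunk to the vertex of $S^{*}$ it must contain rather than to components of $G[S^{*}]$.
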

  \begin{proof}
  If a graph is not connected, we can apply the following algorithm for each component 
 and output a found set of the minimum size. Hence we assume that $G$ is connected in what follows.
  
  We guess the safe number $\safe(G)$ and denote the hypothetical value by $s \le |V(G)|$.
  We start with an arbitrary connected set $S$ such that $|S| = s+1$.
  Assume that $G - S$ has a component $C$ of size more than $s$.
  Let $C' \subseteq C$ be a connected set of size $s+1$ that has a neighbor in $S$.
  We put all the vertices of $C'$ into $S$.
  We repeat this process until every component of $G-S$ becomes of size at most $s$.
  
  Observe that the resulting set $S$ is a safe set since $G[S]$ is connected and
  $G-S$ has no component of size larger than $|S| \ge s$.
  Observe that each time we added some vertices to $S$ (even at the initialization of $S$), 
  we added a connected set $X$ of size $s+1$.
  Thus, for any safe set $S$ of size $s$, it holds that $S \cap X \ne \emptyset$.
  This means that we add at most $s+1$ vertices to include each vertex in an optimal solution.
  Therefore the output is a connected safe set of size at most $\safe(G) \cdot (\safe(G)+1)$ provided that the guess $s$ is correct.
  \end{proof}

 The following observation implies that when parameterized by the solution size and the maximum degree $\Delta(G)$,
 SS and CSS admit FPT algorithms for general graphs and polynomial kernels for connected graphs.
 Recall that the problems are NP-complete for graphs of bounded maximum degree~\cite{AguedaCFLMMMNOS_structural},
 and do not admit a polynomial kernel even for connected graphs when parameterized only by solution size (Corollary~\ref{cor:no_poly-kernel}).
 \begin{observation}\label{obs:degree}
   For every connected graph $G$, $|V(G)| \le \safe(G) + \safe(G)^{2} \cdot \Delta(G)$.
 \end{observation}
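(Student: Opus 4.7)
The plan is a direct counting argument based on a minimum safe set. Let $S$ be a minimum safe set of $G$, so $|S| = \safe(G)$. I would first bound the size of each component of $G - S$, then bound the number of such components, and multiply.

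For the size bound, I would observe that because $G$ is connected, every component $D$ of $G - S$ is adjacent to some component $C$ of $G[S]$. The safe-set condition then gives $|D| \le |C| \le |S| = \safe(G)$. So every component of $G - S$ contains at most $\safe(G)$ vertices.

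For the count of components, I would note that every component $D$ of $G - S$ must contain at least one vertex with a neighbor in $S$ (again by connectivity of $G$). Hence the number of components of $G - S$ is at most the number of edges in $E(S, V \setminus S)$, which is in turn at most $|S| \cdot \Delta(G) = \safe(G) \cdot \Delta(G)$.

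Combining these two bounds, $|V \setminus S| \le \safe(G) \cdot \Delta(G) \cdot \safe(G) = \safe(G)^{2} \cdot \Delta(G)$, and adding $|S| = \safe(G)$ yields the claimed inequality. There is no real obstacle here; the only delicate point is to make explicit the use of connectivity of $G$ at both steps (to guarantee each $G - S$-component is adjacent to $S$, and thus both of the bounds apply).
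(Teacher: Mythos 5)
Your proof is correct and follows exactly the paper's argument: each component of $G-S$ has size at most $|S|$ by the safe-set condition (using connectivity to ensure adjacency to $S$), and the number of components is at most $|S|\cdot\Delta(G)$. The paper states this in one line; you have merely spelled out the same reasoning in more detail.
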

  \begin{proof}
  If $S$ is a safe set, then there are at most $|S| \cdot \Delta(G)$ components in $G-S$
  and each of them has size at most $|S|$.
   
  \end{proof}
 
 A related observation is that both SS and CSS admit kernels of order roughly $k^{O(k)}$.
 Here we need the (already mentioned) fact that $\td(G) \le 2 \safe(G)$.
 The treedepth $\td(G)$ of a graph $G$ is defined recursively as follows:
 \begin{itemize}
   \item if $G$ has one vertex only, then $\td(G) = 1$;
   \item if $G$ has the components $C_{1}, \dots, C_{q}$ with $q \ge 2$, then $\td(G) = \max_{1 \le i \le q} \td(G[C_{i}])$;
   \item otherwise, $\td(G) = 1 + \min_{v \in V(G)} \td(G) - \{v\}$.
 \end{itemize}
 
 \begin{observation}
 \label{obs:safe-vs-td} 
 $\td(G) \le 2 \safe(G)$ for every graph $G$.
 \end{observation}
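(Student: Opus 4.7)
The plan is to use the safe set $S$ itself as the vertex set whose removal shatters $G$ into pieces of small treedepth. I first assume $G$ is connected; for disconnected $G$, treedepth is the maximum over connected components, so it is enough to establish the bound for each component in turn (applying the result to the component that minimises $\safe$ does not suffice, but the same argument below goes through component by component).

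Let $S$ be a minimum safe set of $G$ and write $s = \safe(G) = |S|$. The key claim is that every component $D$ of $G - S$ satisfies $|D| \le s$. Indeed, by connectivity of $G$ the component $D$ is adjacent to at least one vertex of $S$, and hence to some component $C$ of $G[S]$; the defining property of a safe set then yields $|D| \le |C| \le |S| = s$.

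Since $\td(H) \le |V(H)|$ holds for every graph $H$, each component of $G - S$ has treedepth at most $s$, so the second clause of the recursive definition of treedepth gives $\td(G - S) \le s$. Applying the third clause once for each vertex of $S$ yields the standard inequality $\td(G) \le |S| + \td(G - S)$, and combining the two bounds produces $\td(G) \le s + s = 2\safe(G)$.

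The only non-routine ingredient is the size bound on components of $G - S$, which is precisely what the safe-set condition delivers once connectivity is invoked; the remainder is a direct unfolding of the recursive definition of $\td$. The main obstacle to beware of is the disconnected case, where a naive appeal to $\safe$ of the whole graph is misleading because a minimum safe set can live entirely inside a single component; this is handled by noting that the argument applies to each component separately and that $\td$ of $G$ equals the maximum over components.
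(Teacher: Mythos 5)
Your argument for connected $G$ is exactly the paper's proof: the standard inequality $\td(G) \le |S| + \td(G-S)$, the fact that every component of $G-S$ has size at most $|S|$, and the bound $\td(H)\le |V(H)|$ applied to those components. You are also right that connectivity is what actually delivers the component-size bound (each component of $G-S$ must be adjacent to some component of $G[S]$, so the safe-set condition applies to it); the paper leaves this implicit.

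Where your write-up goes astray is the disconnected case -- though the fault is partly the statement's. Arguing component by component gives $\td(G_i) \le 2\safe(G_i)$ for each component $G_i$, hence $\td(G) \le 2\max_i \safe(G_i)$; but $\safe(G) = \min_i \safe(G_i)$, because a safe set of a single component is already a safe set of all of $G$ (the components of $G-S$ lying in the untouched parts of $G$ are adjacent to nothing in $S$, so the defining condition is vacuous for them). Consequently the inequality as literally stated fails for disconnected graphs: $K_1 \cup K_n$ has safe number $1$ but treedepth $n$. You clearly sensed the problem (``applying the result to the component that minimises $\safe$ does not suffice'') but then asserted that the componentwise argument resolves it, which it does not -- it proves a bound in terms of $\max_i\safe(G_i)$, not $\safe(G)$. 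The honest reading is that the observation is a statement about connected graphs, which is the only way the paper uses it (Observation~\ref{obs:kernel} is stated for connected instances).
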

  \begin{proof}
  Let $S$ be a safe set of $G$ with size $k \coloneqq \safe(G)$.
  Observe that $\td(G) \le k + \td(G - S)$.
  Now since $S$ is a safe set, each component of $G-S$ has size at most $k$.
  A graph of $k$ vertices has treedepth at most $k$,
  and thus $\td(G - S) \le k$.
   
  \end{proof}
 
 \begin{observation}\label{obs:kernel} There exists a polynomial-time algorithm
 which, given a connected instance of SS, or CSS, produces an equivalent
 instance with at most $\safe(G)+(2\safe(G))^{2\safe(G)}$ vertices.
 \end{observation}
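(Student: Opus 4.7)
My plan is to combine Observation~\ref{obs:safe-vs-td}, which gives $\td(G) \le 2\safe(G)$, with a twin-elimination reduction in the style of standard kernelization for bounded-treedepth graph problems. Writing $s \coloneqq \safe(G)$ and $d \coloneqq 2s$, the idea is that the treedepth bound plus twin-reducedness together force the vertex count of an equivalent instance to be at most $s + d^d$, matching the claimed bound.

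First I would introduce the reduction: if $G$ contains a set $\{u_1,\ldots,u_t\}$ of pairwise twins (either all sharing the same open neighborhood or all sharing the same closed neighborhood) with $t$ exceeding a threshold depending on $s$, delete one copy. The key soundness claim is that this preserves the existence of a safe set of every target size: since a twin class is structurally symmetric, any safe set can be standardized so that its intersection with the class is a canonical prefix, and excess twins are then redundant. The reduction runs in polynomial time since each application strictly decreases $|V(G)|$, and is applied until no large twin class remains. After exhaustive reduction, a connected graph of treedepth $\le d$ has at most $s + d^d$ vertices: using a depth-$d$ elimination tree as a structural witness, each vertex is characterized by its depth (at most $d$ options) and the subset of its ancestors to which it is adjacent in $G$ (at most $2^d$ options), and twin-reducedness caps the multiplicity of each such class, yielding a total of $d^{O(d)}$ vertices, so the output has at most $s+d^d = \safe(G)+(2\safe(G))^{2\safe(G)}$ vertices.

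The main obstacle will be proving soundness of the twin reduction, because the safety condition compares component sizes on \emph{both} sides of the $S/(V\setminus S)$ partition simultaneously, so deleting a twin can affect two comparisons at once. I would handle this with a case analysis on how a twin class intersects the safe set (empty, partial, or full), using twin symmetry to argue that an optimal distribution of twins between $S$ and $V\setminus S$ never needs more than the threshold on either side, and that the component-size comparison is preserved under the resulting ``rebalancing.''
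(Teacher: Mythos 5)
Your proposal has two genuine gaps, and the one in the counting step is fatal. A twin-free connected graph of treedepth $d$ can have arbitrarily many vertices: take a spider with $n$ legs of length two (a center $c$, middle vertices $m_1,\dots,m_n$, leaves $u_1,\dots,u_n$). Its treedepth is $3$, and no two vertices are twins since $N(u_i)=\{m_i\}$ and $N(m_i)=\{c,u_i\}$ are pairwise distinct, yet it has $2n+1$ vertices. The flaw in your count is that ``depth plus the set of ancestors one is adjacent to'' does not determine a vertex up to twin equivalence: distinct branches of the elimination tree give distinct ancestor sets, so the number of such classes is unbounded unless the branching of the elimination tree --- in effect, the maximum degree --- is bounded. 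That degree bound is exactly the ingredient the paper supplies and you omit: every vertex of degree at least $2k$ (where $k=\safe(G)$) must lie in \emph{every} safe set of size $k$, since otherwise it sits in a component of $G-S$ of size at least $k+1$; hence either more than $k$ such vertices exist and the instance is a trivial No, or the at most $k$ high-degree vertices can be set aside and the remainder has maximum degree at most $2k$ \emph{in addition to} treedepth at most $2k$ (via Observation~\ref{obs:safe-vs-td}), which is what yields the $(2k)^{2k}$ bound.

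The second gap is the soundness of the twin-deletion rule, which you correctly flag as the main obstacle but cannot defer: deleting a vertex changes component sizes on both sides of the partition, so it can flip a No instance to Yes (a component of $G-S$ of size exactly $|C|+1$ adjacent to a component $C$ of $G[S]$ becomes harmless after one deletion) as well as Yes to No, and the paper explicitly emphasizes that safe sets are not preserved under passing to induced subgraphs. Tellingly, the paper's algorithm performs \emph{no} graph modification whatsoever --- it only tests two cardinality conditions and outputs either a trivial No instance or the unchanged input --- precisely so that no such preservation lemma is needed. If you want to salvage a reduction-rule approach, you would first need the high-degree argument anyway, at which point the reduction rule buys you nothing over the paper's direct size bound.
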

 
  \begin{proof}
  
  Let $k:=\safe(G)$. We first observe that if a vertex $u$ has degree at least
  $2k$ then any safe set (connected or not) must contain $u$, because otherwise
  $u$ together with the (at least $k$) of its neighbors which are not in the
  solution form a component of size $k+1$ or more.  Therefore, if $G$ has $k+1$
  or more vertices of degree at least $2k$ we immediately produce a trivial No
  instance. Let $H$ be the set of vertices of degree at least $2k$.
  
  If we have $|H|\le k$ we need to show that $|V\setminus H|\le (2k)^{2k}$ in any
  Yes instance. However, in a Yes instance, 
  $\td(G) \le 2k$ by Observation~\ref{obs:safe-vs-td}. 
 This implies that
  $\td(G - H)\le 2k$. Furthermore, the maximum degree of $G - H$ is at most $2k$. It is now an easy observation that a graph with
  tree-depth $2k$ and maximum degree $2k$ cannot have more than $(2k)^{2k}$
  vertices. So, if $V\setminus H$ is larger than that we reject, otherwise we
  have the promised bound.  
\end{proof}
 
 Observation \ref{obs:kernel} is interesting as a demonstration of the
 algorithmic difficulties posed by the fact that our problems are not closed
 under taking induced subgraphs (unlike most graph parameters). Since we succeed
 in bounding the number of vertices of high degree, one may be tempted to
 believe that Observation \ref{obs:degree} would then allow us to obtain a
 polynomial kernel (as in the remainder of the graph $\Delta=O(k)$). This does
 not work, as a safe set of the original graph does not necessarily remain a
 safe set of the graph induced by low-degree vertices. As a result, we are
 forced to argue indirectly through a more well-behaved parameter (tree-depth).
 The results of Section \ref{sec:kernel} indicate that this is most likely to be
 inevitable, as the problem does not admit a polynomial kernel unless $\mathrm{PH} = \Sigma^{\mathrm{p}}_{3}$.
 
 
 \subsection*{Similar graph parameters}
 While the safe number is a recently-introduced parameter,
 it bears some similarities with other graph parameters that are somewhat older and relatively well-studied.
 These are the vertex integrity and the $\ell$-component order connectivity.
 Here we discuss some known results on these parameters and compare them to those on the safe number.
 
 The \emph{vertex integrity} $\vi(G)$~\cite{BarefootES87} of a graph $G$ is the minimum integer $k$ such that
 there is a vertex set $S \subseteq V$ such that $|S|$ plus the maximum size of the components of $G - S$ is at most $k$.
 Fujita and Furuya~\cite{FujitaF_preprint} showed that it is $2 \sqrt{\safe(G) - 2} + 1 \le \vi(G) \le 2 \safe(G)$ 
 for every connected graph $G$ (except stars) and that the bounds are tight.
 From this relation, one might speculate whether the concept of vertex integrity is essentially equivalent to that of safe number.
 This is not in fact true as their complexity differs for some cases:
 \begin{itemize}
   \item \textit{Interval graphs}:
   For unweighted interval graphs, 
   both the safe number and the vertex integrity can be determined in polynomial time~\cite{KratschKM97,AguedaCFLMMMNOS_structural}.
  However, for weighted interval graphs, determining the safe number is NP-hard as mentioned above,
   but it is still polynomial-time solvable for the vertex integrity~\cite{RayKZJ06}.
 
   \item \textit{Split graphs}:
   For unweighted split graphs, the vertex integrity is trivially determined~\cite{LiZZ08},
   while the safe number is NP-hard to determine~\cite{AguedaCFLMMMNOS_structural}.
   Determining the vertex integrity is NP-hard for weighted split graphs~\cite{DrangeDH16}.
 
   \item \textit{Parameterized complexity}:
   The vertex integrity problem parameterized by the solution size $k$ 
   admits a kernel of $k^{3}$ vertices even for the weighted version~\cite{DrangeDH16}.
   On the other hand,
   our (unweighted) problem does not admit any polynomial kernel parameterized by the solutions size 
   unless $\mathrm{PH} = \Sigma^{\mathrm{p}}_{3}$ (see Corollary~\ref{cor:no_poly-kernel}).
 \end{itemize}

 The \emph{$\ell$-component order connectivity} of a graph is the minimum integer $k$ such that
 there is a vertex set $S \subseteq V$ with $|S| \le k$ and each component of $G - S$ has size at most $\ell$.
 Such an $S$ is an \emph{$\ell$-size separator}.
 Because of the second parameter $\ell$, this parameter cannot be directly compared to the safe number.
 We summarize their similarities and differences:
 \begin{itemize}
   \item \textit{Graph classes}:
   For weighted interval graphs~\cite{Ben-AmeurMN15,DrangeDH16} and weighted circular-arc graphs~\cite{Ben-AmeurMN15} 
   the problem can be solved in polynomial time even if $\ell$ is part of the input.
 
   \item \textit{Parameterized complexity}:
   When parameterized only by $\ell$ or the solution size $k$,
   the problem is W[1]-hard even for (unweighted) split graphs~\cite{DrangeDH16}.
   There is a $2^{O(k \log \ell)} \cdot n$-time algorithm,
   which is tight in the sense that there is no $2^{o(k \log \ell)} \cdot n^{O(1)}$-time algorithm unless the ETH fails~\cite{DrangeDH16}.
   It is known that the problem admits a kernel of $O(k \ell)$ vertices~\cite{Xiao17a} 
   (see also \cite{DrangeDH16,KumarL16} for previous results on polynomial kernels).
   When parameterized by both the treewidth $\tw$ and the solution size $k$, the problem is W[1]-hard,
   and the ETH implies that there is no $f(\tw + k) \cdot n^{o((\tw+k)/ \log (\tw+k))}$-time algorithm~\cite{BonnetBKM17}.
   
   \item \textit{Approximation}:
   An $(\ell+1)$-approximation is obtained by the greedy algorithm~\cite{Ben-AmeurMN15}.
   There is a $2^{O(\ell)} n + n^{O(1)}$-time approximation algorithm of a factor $O(\log \ell)$,
   but there is no approximation algorithm that runs in time polynomial both in $n$ and $\ell$ and
   with an approximation factor $n^{(1/\log\log n)^{c}}$ unless the ETH fails~\cite{Lee18}.
 \end{itemize}
 
 There is another graph parameter, called the \emph{fracture number},
 introduced in \cite{DvorakEGKO17} to provide a way to tackle integer linear programs with small ``backdoors''.
 Given a bipartite graph $G$, its fracture number is the minimum $k$ such that there is a vertex set $S$ of size at most $k$
 while each component of $G-S$ has size at most $k$.
 Among other results, it is shown in \cite{DvorakEGKO17} that 
 the problem is NP-hard in general,
 but can be solved in time $O((k+1)^{k} \cdot m)$,
 and there is also an approximation algorithm with ratio $k+1$.

 Note that the property of having a constant vertex integrity or a constant $\ell$-component order connectivity is minor closed.
 This is not true for the safe number and the connected safe number.
 For example, let $C_{8}'$ be the graph obtained from $C_{8}$, the cycle of eight vertices,
 by adding a vertex $v$ adjacent to a diagonal pair of distance 4 in $C_{8}$.
 Since $N[v]$ is a connected safe set, $\safe(C_{6}) \le \csafe(C_{6}) \le 3$.
 On the other hand, it can be shown that $\safe(C_{8}) = \csafe(C_{8}) = 4$.


\section{W[2]-hardness parameterized by pathwidth}\label{sec_whard_pw}

In this section we show that \textsc{Safe Set} is W[2]-hard parameterized by
pathwidth, via a reduction from \textsc{Dominating Set}. 
 
Given an instance $[G=(V,E),k]$ of \textsc{Dominating Set}, we will construct an instance $[G'=(V',E'),\pw(G')]$ of
\textsc{Safe Set} parameterized by pathwidth. Let $V=\{v_1,\dots,v_n\}$ and $k'=1+kn+\sum_{v\in V}k(\delta(v)+1)$
be the target size of the safe set in the new instance, where $\delta(v)$ is the degree of $v$ in $G$.

Before proceeding, let us give a high-level description of some of the key ideas of
our reduction (an overview is given in Figure \ref{fig:dom_global}.
First, we note that our new instance will include a universal
vertex. This simplifies things, as such a vertex must be included in any safe
set (of reasonable size) and ensures that the safe set is connected. The
problem then becomes: can we select $k'-1$ additional vertices so that their
deletion disconnects the graph into components of size at most $k'$.

The main part of our construction consists of a collection of $k$ cycles of
length $n^2$. By attaching an appropriate number of leaves to every $n$-th
vertex of such a cycle we can ensure that any safe set that uses exactly $n$
vertices from each cycle must space them evenly, that is, if it selects the
$i$-th vertex of the cycle, it also selects the $(n+i)$-th vertex, the
$(2n+i)$-th vertex, etc. As a result, we expect the solution to invest $kn$
vertices in the cycles, in a way that encodes $k$ choices from $[n]$.
 
We must now check that these choices form a dominating set of the original
graph $G$. For each vertex of $G$ we construct a gadget and connect
this gadget to a different length-$n$ section of the cycles. This type of
connection ensures that the construction will in the end have small pathwidth,
as the different gadgets are only connected through the highly-structured ``choice''
part that consists of the $k$ cycles. We then construct a gadget for each
vertex $v_i$ that can be broken down into small enough components by deleting
$k(\delta(v_i)+1)$ vertices, if and only if we have already selected from the
cycles a vertex corresponding to a member of $N[v]$ in the original graph.

\paragraph{Domination gadget:} Before we go on to describe in
detail the full construction, we describe a \emph{domination gadget}
$\hat{D}_i$.  This gadget refers to the vertex $v_i\in V$ and its purpose is to
model the domination of $v_i$ in $G$ and determine the member of $N[v_i]$ that
belongs to the dominating set. We construct $\hat{D}_i$ as follows,
while Figure \ref{fig:dom_gadget} provides an illustration: 

\begin{itemize} 
\item We make a \emph{central} vertex $z^i$. We attach to this vertex
$k'-k(\delta(v_i)+1)$ leaves. Call this set of leaves $W^i$.
\item We make $k$ independent sets $X^i_1,\dots,X^i_k$ of size $|N[v_i]|$.  For
each $j\in[1,k]$ we associate each $x\in X^i_j$ with a distinct member of
$N[v_i]$. We attach to each vertex $x$ of each $X^i_j, j\in[1,k]$ an
independent set of $k'-1$ vertices. Call this independent set $Q_x$.
\item We then make another $k$ independent sets $Y^i_1,\dots,Y^i_k$ of size
$|N[v_i]|$. For each $j\in[1,k]$ we construct a perfect matching from $X^i_j$
to $Y^i_j$. We then connect $z$ to all vertices of $Y^i_j$ for all $j\in[1,k]$. 
\end{itemize}

\begin{figure}[htb]
    \centering
    \includegraphics[scale=1]{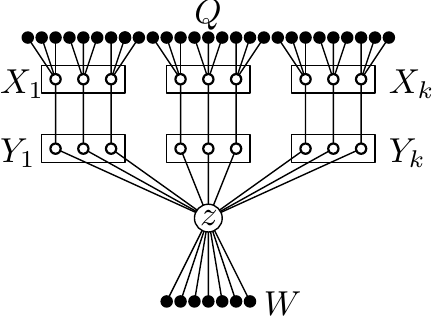}
    \caption{Our domination gadget $\hat{D}_i$. Superscripts $i$ are omitted.}
   \label{fig:dom_gadget}
  \end{figure}
The intuition behind this construction is the following: we will connect the
vertices of $X^i_j$ to the $j$-th selection cycle, linking each vertex with the
element of $N[v_i]$ it represents. In order to construct a safe set, we need to
select at least one vertex $y\in Y^i_j$ for some $j\in[1,k]$, otherwise the
component containing $z$ will be too large. This gives us the opportunity to
\emph{not} place the neighbor $x\in X^i_j$ of $y$ in the safe set.
This can only happen, however, if the neighbor of $x$ in the main part is also in the
safe set, i.e.\ if our selection dominates $v_i$.

\paragraph{Construction:} Graph $G'$ is constructed as follows:
\begin{itemize} \item We first
make $n$ copies of $V$ and serially connect them in a long cycle, conceptually
divided into $n$ \emph{blocks}:
$v_1,v_2,\dots,v_n|v_{n+1},\dots,v_{2n}|v_{2n+1},\dots|\dots,v_{n^2}|v_1$. Each
block corresponds to one vertex of $V$.
\item We make $k$ copies
$V^1,\dots,V^k$ of this cycle, where $V^i=\{v_1^i,\dots,v_{n^2}^i\},\forall
i\in[1,k]$ and refer to each as a \emph{line}. Each such line will correspond
to one vertex of a dominating set in $G$.
\item We add a set $B_j^i$ of
$k'-n+1$ neighbors to each vertex $v^i_{(j-1)n+1},\forall j\in[1,n]$, i.e.\ the
first vertex of every block of every line. We refer to these sets collectively
as the \emph{guards}. 
\item Then, for each \emph{column} of blocks (i.e.\ the
$i$-th block of all $k$ lines for $i\in[1,n]$), we make a domination gadget
$\hat{D}_i$ that refers to vertex $v_i\in V$. As described above, the gadget
contains $k$ copies $X^i_1,\dots,X^i_k$ of $N[v_i]\subseteq V$.
\item For $i \in [1,n]$ and $j \in [1,k]$, we add an edge between each vertex in $X^i_j$ and its 
corresponding vertex in the $i$-th block of $V^j$, i.e.\ for the given $i$-th column, we connect
a vertex from the $j$-th line ($V^j$) to the vertex from the $j$-th copy of
$N[v_i]$ ($X^i_j$) if they correspond to the same original vertex from $V$.
\item We add a \emph{universal vertex} $u$ and connect it to every other vertex
in the graph. This concludes our construction (see also Figure \ref{fig:dom_global}).
\end{itemize}

\begin{figure}[htb]
  \centering
  \includegraphics[scale=1]{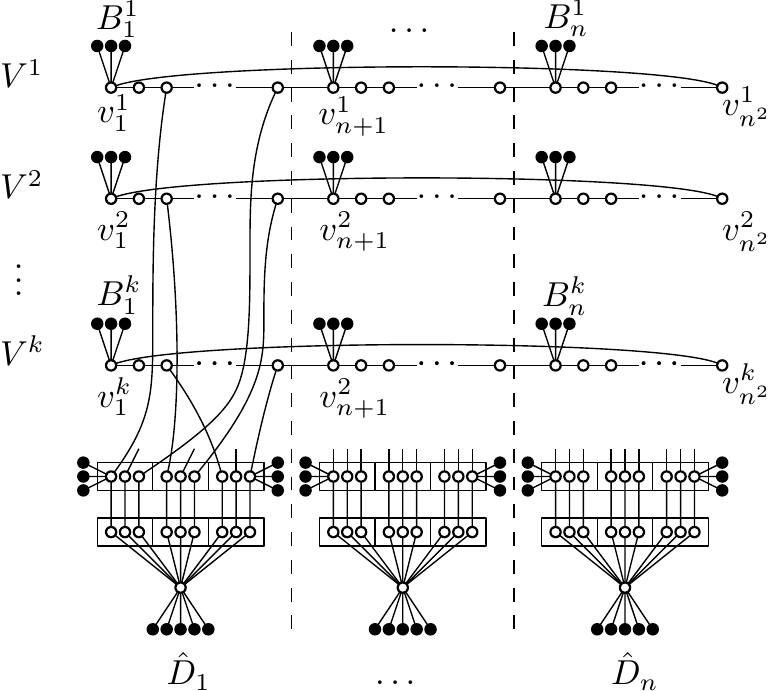}
  \caption{A simplified picture of our construction. Note sets $Q$ are only shown for two vertices per gadget, exemplary connections between corresponding vertices in sets $X$ and lines $V$ are only shown for the first gadget, while the universal vertex $u$ is omitted.}
  \label{fig:dom_global}
\end{figure}
 
 \begin{lemma}\label{whard_pw_lemFWD}
  If $G$ has a dominating set of size $k$, then $G'$ has a safe set of size $k'=1+kn+\sum_{v\in V}k(\delta(v)+1)$.
 \end{lemma}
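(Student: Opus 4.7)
The plan is to exhibit $S\subseteq V(G')$ explicitly from a given dominating set $D=\{v_{a_1},\ldots,v_{a_k}\}$ of $G$, and then to verify that $|S|=k'$, that $G'[S]$ is connected (so is a single component of size $k'$), and that every component of $G'-S$ has at most $k'$ vertices. First I put the universal vertex $u$ into $S$. From each line $V^j$ I include the $a_j$-th vertex of every one of the $n$ blocks, thus selecting $n$ evenly-spaced vertices on the length-$n^2$ cycle. Inside each domination gadget $\hat{D}_i$ I pick an index $j(i)\in[1,k]$ such that $v_{a_{j(i)}}\in N[v_i]$, which exists because $D$ dominates $v_i$; letting $x^{\star}_i\in X^i_{j(i)}$ be the vertex representing $v_{a_{j(i)}}$ and $y^{\star}_i\in Y^i_{j(i)}$ its matched partner, I add to $S$ all of $\bigcup_{\ell\ne j(i)} X^i_\ell$, all of $X^i_{j(i)}\setminus\{x^{\star}_i\}$, and the single vertex $y^{\star}_i$. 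This contributes $k\,|N[v_i]|=k(\delta(v_i)+1)$ vertices per gadget, so $|S|=1+kn+\sum_{v\in V}k(\delta(v)+1)=k'$; and $G'[S]$ is connected because $u$ is adjacent to every other vertex of $S$.

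Next I would bound every component of $G'-S$. Inside a gadget $\hat{D}_i$, the vertex $z^i$ lies in a component together with the leaves $W^i$, every $Y^i_\ell$ with $\ell\ne j(i)$, and $Y^i_{j(i)}\setminus\{y^{\star}_i\}$; using $|W^i|=k'-k(\delta(v_i)+1)$, its size is $1+(k'-k(\delta(v_i)+1))+(k(\delta(v_i)+1)-1)=k'$. The vertex $x^{\star}_i$ together with its attached leaf set $Q_{x^{\star}_i}$ forms a second component of size $1+(k'-1)=k'$, and every $Q_x$ with $x\in S$ contributes only isolated vertices. For a single line $V^j$, the $n$ vertices of $S\cap V^j$ break the length-$n^2$ cycle into $n$ paths of $n-1$ vertices each. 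If $a_j=1$, every block-starting vertex lies in $S$, so each guard set $B^j_m$ becomes a collection of isolated leaves and each path has size $n-1\le k'$. If $a_j>1$, each of the $n$ paths contains exactly one block-starting vertex (namely $v^j_{mn+1}$ for an appropriate $m$), whose $k'-n+1$ guards merge into the path, giving a component of total size $(n-1)+(k'-n+1)=k'$. It remains to check that no line component merges with a gadget component: the only $X$-vertex outside $S$ is $x^{\star}_i$, and its unique line neighbour is the $a_{j(i)}$-th vertex of block $i$ of $V^{j(i)}$, which lies in $S$; since $u$ is in $S$ as well, all the bounds computed above are attained on genuinely separated components.

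The main obstacle---really a matter of careful bookkeeping rather than a deep step---is that the gadget parameter $|W^i|=k'-k(\delta(v_i)+1)$ and the guard size $|B^j_m|=k'-n+1$ are tuned exactly so that the central component of each gadget and each guarded line path inflate to precisely $k'$, which in turn forces the selection to use exactly $n$ evenly-spaced vertices from every line and to omit exactly one carefully chosen $X$-vertex per gadget. Once these equalities are written out and the absence of bridging edges is confirmed (only $u$ and the $X$-to-line edges could cause merging, and both touch $S$ in the critical places), the three safe-set requirements follow directly.
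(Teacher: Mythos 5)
Your proposal is correct and follows essentially the same construction and verification as the paper's own proof: include the universal vertex, the $n$ evenly-spaced copies of each dominating-set vertex on its line, and all gadget $X$-vertices except one per gadget (plus its matched $Y$-partner), then check that each residual component (guarded line path, $Q_{x^\star}$-component, and central $z^i$-component) has size exactly $k'$. Your treatment is if anything slightly more explicit than the paper's, e.g.\ in distinguishing the $a_j=1$ case for the guards and in verifying that no line component merges with a gadget component.
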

 \begin{proof}
  Given a dominating set $K$ in $G$ of size $|K|=k$, we will construct a safe set $S$ in $G'$ of size $|S|=k'=1+kn+\sum_{v\in V}k(\delta(v)+1)$. First, we include in $S$ the universal vertex $u$. We then arbitrarily order the vertices of $K$ and will include in $S$ the $n$ copies (one from each block) of each vertex of $K$ from the line that matches this ordering, i.e.\ if the $i$-th vertex of $K$ is $v_j$, then set $S$ will include all vertices $v_j^i,v_{j+n}^i,\dots,v_{j+n^2-n}^i$ for each $i\in[1,k]$.
  
  Then, for each column and domination gadget $\hat{D}_i$, we consider the closed neighborhood $N[v_i]$ of vertex $v_i$ in $G$: as $K$ is a dominating set, $K\cap N[v_i]\not=\emptyset$, i.e.\ there must be at least one vertex $x$ from $N[v_i]$ included in $K$. If this is the $j$-th vertex in $K$ according to our ordering, we then include all vertices from sets $X_1^i,\dots,X_k^i$ in $S$, \emph{except} for the copy of $x$ in $X_j^i$ and we also include in $S$ the vertex $y$ from set $Y_j^i$ that is paired with $x$ (and is adjacent to it). Note also that set $S$ must already include a vertex corresponding to $x$ from line $V^j$. Thus $k(\delta(v_i)+1)$ vertices are selected from $\hat{D}_i$, one from the sets $Y$ and all but one from the sets $X$, while these selections are complementary. Repeating this process for all gadgets $\hat{D}_i$ (i.e.\ all columns) completes set $S$ and we claim it is a safe set for $G'$. Observe that the size of $S$ is now $|S|=k'=1+kn+\sum_{v\in V}k(\delta(v)+1)$.
  
  Since $S$ includes vertex $u$ that is connected to all other vertices, then set $S$ is connected and for it to be safe, the size of any \emph{remaining} connected component in $G' - S$ must be no larger than $k'=1+kn+\sum_{v\in V}k(\delta(v)+1)$. First, from every line $V^i$, the set $S$ periodically includes every $n$-th vertex (all corresponding to some vertex of the dominating set solution $K$) and 
 if some $v_l^i \in V^i$ with a neighbor in a dominating gadget $\hat{D}_j$ is not included in $S$, 
 then the neighbor in the gadget must be included in $S$ instead. 
 Thus the size of any remaining component in $V^i$ is at most $|B_j^i|+n-1=k'-n+1+n-1=k'=|S|$. Within a domination gadget $\hat{D}_i$, there is only one vertex $x$ from sets $X_j^i$ that does not belong in $S$, while both its neighbors from $V^j$ and $Y_j^i$ are in $S$. The size of any remaining component that includes  $x$ is equal to $|Q_x|+1=k'=|S|$. Finally, the size of any remaining component that contains the central vertex $z^i$ is equal to $1+|W^i|+k(\delta(v_i)+1)-1=1+k'-k(\delta(v_i)+1)+k(\delta(v_i)+1)-1=k'=|S|$, due to set $S$ containing only one vertex from sets $Y_j^i$ (that is adjacent to the only vertex from $X_j^i$ that is not in $S$). As there is no remaining component in $G'- S$ that is larger than $S$, our solution $S$ is safe.
 \end{proof}

 \begin{lemma}\label{whard_pw_lemBWD}
  If $G'$ has a safe set of size $k'=1+kn+\sum_{v\in V}k(\delta(v)+1)$, then $G$ has a dominating set of size $k$.
 \end{lemma}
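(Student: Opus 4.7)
The plan is to show that any safe set $S$ in $G'$ of size at most $k'$ must project onto a dominating set of $G$ of size $k$. I would first observe that $u \in S$: otherwise the component of $u$ in $G'-S$ has size at least $|V(G')|-k'$, is adjacent to every $S$-vertex, and dwarfs $k'$, violating safeness. Hence $G'[S]$ is one connected component and safeness reduces to bounding every component of $G'-S$ by $k'$. Two private-leaf structures then drive the budget. Each anchor $v^j_{(i-1)n+1}$ carries $|B^i_j|=k'-n+1$ guards, so any arc (maximal non-$S$ subpath of $V^j$) containing a non-$S$ anchor has length at most $n-1$, and an arc cannot host two non-$S$ anchors since $k' \gg 2n$. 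Each $x \in X^i_j$ has $|Q_x|=k'-1$ private leaves, so if $x \notin S$ then both its matched $y \in Y^i_j$ and its cross-line neighbor in $V^j$ must lie in $S$. Counting matched pairs and anchor-separating cuts yields $|S \cap V^j| \geq n$ and $|S \cap \hat{D}_i| \geq k(\delta(v_i)+1)$, which together with $u$ sum to exactly $k'$, so every inequality is tight.

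Exploiting this tightness, $z^i \notin S$ (otherwise the gadget count exceeds $k(\delta(v_i)+1)$) and exactly one member of each matched pair $(x,y)$ lies in $S$. The component of $z^i$ equals $\{z^i\}\cup W^i \cup (Y\setminus S)$ where $Y=\bigcup_j Y^i_j$, and its size is $1+k'-\sum_j |Y^i_j \cap S|$, forcing $\sum_j |Y^i_j \cap S| \geq 1$ for every gadget. Turning to a line $V^j$, any block holding no $S$-vertex would place its anchor inside an arc of size at least $n$, contradicting the $n-1$ bound; combined with $|S\cap V^j|=n$, each of the $n$ blocks of $V^j$ contains exactly one $S$-vertex, at some position $r^j_i\in[1,n]$.

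The central technical step is to prove that $r^j_i$ is independent of $i$ in each line. For consecutive blocks, the arc between the two chosen vertices has length $n+r^j_{i+1}-r^j_i-1$ and contains the anchor of block $i+1$ whenever $r^j_{i+1}\geq 2$; the arc-size bound then yields $r^j_{i+1}\leq r^j_i$, while the case $r^j_{i+1}=1$ satisfies the same inequality trivially. Cyclically iterating over $i\in[1,n]$ collapses the sequence to a constant $r_j$. I expect this uniformity step to be the main obstacle, as it requires carefully tracking where anchors sit between two successive selections and ruling out seemingly admissible non-uniform patterns; everything else in the argument is budget arithmetic.

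Finally, I would set $K=\{v_{r_j}:j\in[1,k]\}$, a set of size at most $k$. To verify domination, fix any $v_i \in V$ and use $\sum_j |Y^i_j \cap S| \geq 1$ to pick $j^*$ with a selected $y \in Y^i_{j^*} \cap S$; its matched partner $x \in X^i_{j^*} \setminus S$ represents some $v_{r^*}\in N[v_i]$, and the cross-edge constraint forces $v^{j^*}_{(i-1)n+r^*}\in S$. Uniformity then gives $r_{j^*}=r^*$, so $v_{r_{j^*}}=v_{r^*}\in K\cap N[v_i]$, establishing that $K$ dominates $v_i$ and completing the proof.
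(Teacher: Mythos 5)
Your proof follows essentially the same route as the paper's: force $u\in S$, use the guard sets $B^i_j$ to pin down exactly $n$ evenly spaced selections per line, use the pendant sets $Q_x$ to derive the implication that $x\notin S$ forces both its matched $y$ and its cross-line neighbour into $S$, use $W^i$ to force at least one $Y$-vertex per gadget, and let tightness of the budget $k'=1+kn+\sum_{v}k(\delta(v)+1)$ do the rest, exactly as in the paper. Your handling of the periodicity step (the cyclic monotonicity $r^j_{i+1}\le r^j_i$ forcing a constant offset per line) is in fact spelled out in more detail than in the paper, which merely asserts that the selections ``must be periodical''; that part of your argument is sound.

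The one genuine gap is that nearly every size bound you invoke tacitly assumes $S$ contains no leaves. The claim ``if $x\notin S$ then both its matched $y$ and its cross-line neighbour lie in $S$'' fails if $S$ instead places two vertices inside $Q_x$; likewise the bound of $n-1$ on an arc containing a non-$S$ anchor assumes none of that anchor's $k'-n+1$ guards is in $S$, and the identification of the component of $z^i$ with $\{z^i\}\cup W^i\cup(Y\setminus S)$ assumes $W^i\cap S=\emptyset$ before tightness has been established. The paper disposes of all of this at the outset with an explicit normalization: a leaf in $S$ whose neighbour is outside $S$ is exchanged with that neighbour, and a leaf whose neighbour is already in $S$ is exchanged with an arbitrary non-leaf vertex, feasibility being preserved thanks to the universal vertex. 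You need this step (or a noticeably more delicate counting argument showing that budget spent on leaves can always be reallocated) before your ``budget arithmetic'' is actually valid.
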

 \begin{proof} Given a safe set $S$ of size $k'$ in $G'$, we will construct a
 dominating set $K$ of size $k$ in $G$. First, we may assume without loss of
 generality that set $S$ includes the universal vertex $u$, as it being adjacent
 to every other vertex would imply that otherwise the size of $S$ would have to
 be at least $|V'|/2$.
 
   We now make an easy observation that we may assume (without loss of
 generality) that the safe set we are given does not contain any leaves: if the
 safe set contains a leaf $u$ and does not contain its neighbor, we can exchange
 $u$ with its neighbor; if the safe set contains $u$ and its neighbor, we can
 exchange $u$ with some arbitrary non-leaf vertex, and this does not affect the
 feasibility of the solution thanks to the universal vertex.  
 
 Next, due to the guard vertices $B_j^i$ connected to every first vertex of
 every block of every line, any safe set of size $k'$ must include at least one
 vertex per block per line and these must be periodical so as to allow only
 sub-paths of length $n$ between consecutive selections: the size of every
 $B_j^i$ being $k'-n+1$, if there is such a sub-path between any pair of
 selections that is longer than $n$, then there would be a remaining component
 that is larger than $k'$. Thus set $S$ must include at least $kn+1$ vertices
 outside the domination gadgets $\hat{D}_i$.

  Next consider a gadget $\hat{D}_i$ that corresponds to vertex $v_i\in V$. The size of the set $W^i$ being $k'-k(\delta(v_i)+1)$ implies that there must be at least one vertex in $S$ from the closed neighborhood of central vertex $z^i$, as $|N[z^i]|=|W^i|+1+k(\delta(v_i)+1)=k'+1$. Furthermore, for any vertex $x$ in some $X_j^i$, set $S$ must include either vertex $x$ itself, or at least two vertices from the closed neighborhood of $x$ as the size of each set $Q_x$ is $k'-1$ and $|N[x]|=k'+2$. This means set $S$ must include at least $k(\delta(v_i)+1)$ vertices from each gadget $\hat{D}_i$ (completing the budget for $S$) and furthermore, that it is only possible for $S$ to not include some vertex $x\in X_j^i$ if both its neighbor $y\in Y_j^i$ (its pair) and its neighbor $v_l^j$ in the line $V^j$ (the pair's corresponding vertex) are included in $S$.
  
  Having thus identified the necessary structure of $S$, we construct our dominating set $K$ in $G$ by considering the selections for $S$ from every line $V^i$: as explained above, these selections must be periodical for each line (every $n$-th vertex) and for each $i\in[1,k]$ we include in $K$ the vertex from $V$ that corresponds to every copy from every block that is included in $S$. Note that these need not necessarily be distinct, i.e.\ the selections from a pair of lines might correspond to the same original vertex from $G$, meaning $K$ may be of size less than $k$, yet if $S$ is a safe set of size $k'$ in $G'$, then the resulting set $K$ must be a dominating set in $G$.
  
  Consider each column $i$ of blocks and each domination gadget $\hat{D}_i$. As explained above, the size of $S\cap \hat{D}_i$ is $k(\delta(v_i)+1)$, with at least one of these belonging in the closed neighborhood of central vertex $z^i$, while for each $x\in X_j^i$ it must be either $x\in S$, or $y\in S$ and $v_l^j\in S$, where $y$ is the pair of $x$ in $Y_j^i$ and $v_l^j$ is the pair's corresponding vertex in the line $V^j$. The budget for $\hat{D}_i$ implies there must be at least one $x$ not included in $S$, whose pair $y$ must be included instead (for the remaining component containing central vertex $z^i$ to be of size $\le k'$) and thus, the other neighbor of $x$ in line $V^j$ must also have been included in $S$, i.e.\ set $S$ must already include every copy of $v_l^j$ from every block of line $V^j$. This implies that for the vertex $v_i\in V$ in question (i.e.\ corresponding to column $i$ and gadget $\hat{D}_i$), at least one vertex from its closed neighborhood $N[v_i]$ will be included in $K$ and vertex $v_i$ will thus be dominated by $K$ in $G$. As this must hold for all columns $i$/gadgets $\hat{D}_i$ and vertices $v_i\in V$, set $K$ will be a dominating set in $G$.
 \end{proof}

 \begin{lemma}\label{whard_pw_lempwbound}
 The pathwidth $\pw(G')$ of $G'$ is at most  $2k + 4$.
 \end{lemma}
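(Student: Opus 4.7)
The plan is to build a path decomposition that sweeps through the $n$ columns of blocks one at a time, and within each column through the $n$ positions of the block in order. The invariant I would maintain is that every bag contains the universal vertex $u$, the $k$ anchor vertices $v^1_1,\dots,v^k_1$ (kept in every bag throughout in order to cover the closing edges $v^j_{n^2}v^j_1$ of the cycles), the current vertex of each of the $k$ lines, and, while column $i$ is being processed, the central vertex $z^i$ of the gadget $\hat{D}_i$. The structural fact that makes this feasible is that the only edges leaving $\hat{D}_i$ go to $u$ or to the $i$-th block of the lines, so $\hat{D}_i$ can be handled entirely during the traversal of column $i$ and then discarded.

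Inside column $i$ I would sweep $l=1,\dots,n$. At position $l$, for each line $j$ in turn, I check whether $v_l\in N[v_i]$; if so, I briefly open the pair $(x_j,y_j)\in X^i_j\times Y^i_j$ corresponding to $v_l$: introduce $x_j$ (covering its edge to the current line-$j$ vertex $v^j_{(i-1)n+l}$), then $y_j$ (covering the matching edge $x_jy_j$ and the edge $y_jz^i$), then insert the leaves $Q_{x_j}$ one at a time while $x_j$ is still in the bag, and finally forget the pair together with its leaves. The leaves of $W^i$ and the guards $B^i_j$ are introduced one at a time in the same fashion while their unique non-$u$ neighbour is present. Transitions between positions are standard (introduce the next current vertex on each line, forget the old one), and between columns we simply swap $z^i$ for $z^{i+1}$, which are non-adjacent, so keeping both momentarily poses no difficulty.

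The peak bag size occurs precisely while processing a pair: the bag then contains $u$, $z^i$, the $k$ anchors, the $k$ current line vertices, and $x_j,y_j$, for a total of $2k+4$, matching the claimed bound; all other intermediate bags (leaf insertions, position transitions, column transitions) are strictly smaller. The only point requiring real care is making sure this bound is not exceeded, which is guaranteed by processing the $k$ lines one at a time at each position rather than opening all $k$ pairs simultaneously; the remainder is the routine verification that every edge of $G'$ (cycle edges, pendant edges to $W^i$, $Q_x$, and $B^i_j$, the matching edges of each $X^i_j\times Y^i_j$, the edges from $z^i$ to $Y^i_j$, the edges from $X^i_j$ to the $i$-th block, and all edges incident to $u$) indeed has both endpoints in some bag of the above description.
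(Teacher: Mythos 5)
Your construction is essentially the paper's own proof made explicit: the paper also keeps the universal vertex and the $k$ cycle-breaking vertices $v^j_1$ in every bag, sweeps block by block (exploiting exactly the fact you identify, that $\hat{D}_i$ attaches only to $u$ and to the $i$-th block), and disposes of the $B$, $W$, $Q$, $Y$ and $X$ vertices via two applications of the ``deleting leaves drops pathwidth by at most one'' lemma rather than by naming the bags as you do. One detail in your version needs repair: when you insert the leaves of $Q_{x_j}$ ``while $x_j$ is still in the bag'' but before forgetting the pair, the bag contains $u$, $z^i$, the $k$ anchors, the $k$ current line vertices, $x_j$, $y_j$ \emph{and} one $Q$-leaf, i.e.\ $2k+5$ vertices, so these bags are not ``strictly smaller'' than your claimed peak and, under the paper's convention that width equals maximum bag size, they overshoot the stated bound. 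The fix is immediate: all edges incident to $y_j$ (to $x_j$, $z^i$ and $u$) are already covered, so forget $y_j$ before introducing the $Q_{x_j}$ leaves; the peak then really is $2k+4$ and the lemma follows.
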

 \begin{proof} 
 
 We consider the graph $G''$ obtained from $G'$ by deleting the universal vertex
 as well as the edges connecting $v^j_1$ to $v^j_{n^2}$ for each $j\in[1,k]$. We
 will show that this graph has pathwidth at most $k+3$. This will imply the claim as
 we can obtain a path decomposition of $G'$ by adding to all bags the universal
 vertex and the vertices $v^j_1$ for all $j\in[1,k]$.
 
 Consider now the $i$-th block of the construction, consisting of the gadget $\hat{D}_i$,
 the vertices $\{v^j_{(i-1)n+1},\ldots,v^j_{in}\}$ and $B_{i}^{j}$ for $j\in[1,k]$,
 and additionally the vertex $v^j_{in+1}$, if $i < n$.
 We show that the graph induced by these vertices admits a path
 decomposition of width $k+3$ with the additional property that
 the first bag of the decomposition is $\{v^j_{(i-1)n+1} \mid 1 \le j \le k\}$,
 and the last bag of the decomposition is $\{v^j_{in+1} \mid 1 \le j \le k\}$, if $i < n$. 
 If we prove this claim we are done, since then
 we can take such a decomposition for each block and identify the last bag of
 the decomposition for the $i$-th block with the first bag of the decomposition
 for the $(i+1)$-th block to obtain a decomposition for $G''$ of the same
 width.
 
 Let us now consider the $i$-th block. We will make use of the following
 standard pathwidth fact: for any graph $H$, if $H'$ is the graph obtained by
 deleting all leaves of $H$, then $\pw(H)\le\pw(H')+1$. 
 This can easily be seen by first considering a path decomposition of $H'$ and then, for each
 leaf $u$ of $H$, finding a bag of the original decomposition that contains the
 neighbor of $u$ and inserting immediately after it a copy of the same bag with
 $u$ added. Furthermore, we can do this without changing the first and the last bags in the decomposition.
 We use this fact to bound the pathwidth of the $i$-th block as
 follows: we add the central vertex $z$ to all bags, and therefore may delete it
 from the graph. Now, removing all leaves from the block eliminates all $B$, $Y$, $Q$, and
 $W$ vertices. Removing again all leaves from this graph eliminates $X$. As a result,
 the new graph is a collection of $k$ disjoint paths, for which we can easily
 construct the required decomposition.  
\end{proof}
 
 \begin{theorem}\label{whard_pw_thm}
\textsc{Safe Set} and \textsc{Connected Safe Set} are W[2]-hard parameterized by the pathwidth of the input graph.
Furthermore, both problems cannot be solved in time $n^{o(\pw)}$ unless the ETH is false. 
\end{theorem}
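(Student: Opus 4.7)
The plan is to obtain the theorem as a direct consequence of the three lemmas already proved for the reduction. Given an instance $[G,k]$ of \textsc{Dominating Set}, the construction produces an instance $[G',k']$ of \textsc{Safe Set} with $k' = 1 + kn + \sum_{v \in V} k(\delta(v)+1)$, which is polynomial in the size of $G$. Lemma~\ref{whard_pw_lemFWD} and Lemma~\ref{whard_pw_lemBWD} together establish that $G$ has a dominating set of size $k$ if and only if $G'$ has a safe set of size $k'$, so since \textsc{Dominating Set} parameterized by $k$ is W[2]-hard and by Lemma~\ref{whard_pw_lempwbound} we have $\pw(G') \le 2k+4$, the parameter $\pw(G')$ is bounded by a function of $k$. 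This immediately yields W[2]-hardness of \textsc{Safe Set} parameterized by pathwidth.

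For the ETH-based lower bound I would invoke the standard result (see e.g.~\cite{CyganFKLMPPS15}) that \textsc{Dominating Set} cannot be solved in time $f(k) \cdot n^{o(k)}$ under ETH. Since the reduction is polynomial-time and outputs a graph on $N = |V(G')| = \mathrm{poly}(n)$ vertices with $\pw(G') = O(k)$, an algorithm solving \textsc{Safe Set} in time $N^{o(\pw(G'))}$ would transfer to an $n^{o(k)}$ algorithm for \textsc{Dominating Set}, contradicting ETH. I would double-check that the reduction yields $N = n^{O(1)}$ (indeed $N = O(kn^2 + \sum_v k\delta(v) \cdot k') = \mathrm{poly}(n)$, and $k \le n$), so that the translation of the $o(\cdot)$ bound is clean.

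To handle \textsc{Connected Safe Set} I would use the presence of the universal vertex $u$ in $G'$. Any safe set of size $k' < |V(G')|/2$ must contain $u$ (as the argument at the start of the proof of Lemma~\ref{whard_pw_lemBWD} shows), and any set containing $u$ induces a connected subgraph, so the safe set produced in Lemma~\ref{whard_pw_lemFWD} is automatically connected, and the equivalence $\safe(G') \le k' \iff \csafe(G') \le k'$ holds in this construction. Consequently both the W[2]-hardness and the $n^{o(\pw)}$ lower bound transfer verbatim to \textsc{Connected Safe Set}.

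The main obstacle is essentially bookkeeping rather than a new idea: verifying that the instance size $N$ really is polynomial in $n$ (so the ETH transfer is valid), and checking that the universal-vertex trick for \textsc{Connected Safe Set} does not interfere with the pathwidth bound (it is already accounted for in Lemma~\ref{whard_pw_lempwbound}, where $u$ is added to every bag). Once these routine checks are in place, the theorem follows by concatenating Lemmas~\ref{whard_pw_lemFWD}, \ref{whard_pw_lemBWD}, and~\ref{whard_pw_lempwbound} with the known hardness and ETH lower bound for \textsc{Dominating Set}.
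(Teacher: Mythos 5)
Your proposal is correct and follows essentially the same route as the paper: the theorem is obtained by combining Lemmas~\ref{whard_pw_lemFWD}, \ref{whard_pw_lemBWD}, and \ref{whard_pw_lempwbound} with the W[2]-hardness and the $n^{o(k)}$ ETH lower bound for \textsc{Dominating Set}, and the CSS case is handled via the connectivity forced by the universal vertex. Your additional bookkeeping (checking $|V(G')| = \mathrm{poly}(n)$ and that the universal vertex does not disturb the pathwidth bound) is sound and merely makes explicit what the paper leaves implicit.
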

 \begin{proof}
  Given an instance $[G=(V,E),k]$ of \textsc{Dominating Set}, we use the above construction to create an instance $[G'=(V',E'),\pw(G')]$ of \textsc{Safe Set} parameterized by pathwidth. Lemmas \ref{whard_pw_lemFWD} and \ref{whard_pw_lemBWD} show the correctness of our reduction, while Lemma \ref{whard_pw_lempwbound} provides the bound on the pathwidth of the constructed graph, showing that our new parameter $\pw(G')$ is linearly bounded by the original parameter $k$. The running time bound follows from the fact that, under the ETH, \textsc{Dominating Set} does not admit an $n^{o(k)}$ algorithm.

  The results for CSS follow by the connectivity of the safe set in Lemmas \ref{whard_pw_lemFWD} and \ref{whard_pw_lemBWD}.
 \end{proof}
 
 
\section{No polynomial kernel parameterized by vertex cover number}\label{sec:kernel}

A set $X \subseteq V$ is a \emph{vertex cover} of $G = (V,E)$
if each edge $e \in E$ has at least one endpoint in $X$.
The minimum size of a vertex cover in $G$ is the \emph{vertex cover number} of $G$, denoted by $\vc(G)$. 
Parameterized by vertex cover number $\vc$, both problems are FPT (see Figure~\ref{fig:width-parameters}) and
in this section we show the following kernelization hardness of SS and CSS.
\begin{theorem}
\label{thm:no_poly-kernel}
\textsc{Safe Set} and \textsc{Connected Safe Set}
parameterized by the vertex cover number
do not admit polynomial kernels even for connected graphs
unless $\mathrm{PH} = \Sigma^{\mathrm{p}}_{3}$.
\end{theorem}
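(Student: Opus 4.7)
The plan is to apply the OR-cross-composition framework of Bodlaender, Jansen, and Kratsch, which rules out polynomial kernels (unless $\mathrm{PH} = \Sigma_3^p$) for any NP-hard parameterized problem admitting such a composition. As source I take the NP-hard problem \textsc{Red-Blue Dominating Set} (RBDS). Using the polynomial equivalence relation that groups instances by the triple $(|R|,|B|,k)$, I may assume (after padding inside a class) that the $t$ input instances share the same $R$, $B$, and $k$, differing only in their edge sets $E_1,\ldots,E_t$.

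The construction produces a connected graph $G'$ and an integer $k'$ such that $\vc(G') = O(|B| + \log t)$ and $G'$ admits a (connected) safe set of size at most $k'$ iff at least one $(R,B,E_i,k)$ is a yes-instance. The main building blocks are: a single shared copy of $B$, placed into the vertex cover; a universal vertex $u$, guaranteeing connectivity of $G'$ and forcing $u$ into any small safe set; a selector gadget on $\lceil\log t\rceil$ pairs of vertices whose state encodes the choice of an index $i\in[t]$; and, for each $r\in R$ and $i\in[t]$, a branch gadget attached to $N_{E_i}(r)\subseteq B$ and to the selector. The branch is engineered so that, when the selector points to some $j\neq i$, it contributes only components of size well below $k'$, while when the selector points to $i$ it produces a component of size exceeding $k'$ unless $r$ is dominated by $S\cap B$ in $E_i$. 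The budget $k'$ is tuned so that any safe set must consist of exactly $u$, one vertex of each selector pair, and $k$ vertices of $B$, the latter forming a red-blue dominating set of the selected instance.

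The main technical obstacle is the design of the branch gadget: it must stay ``silent'' for all $t-1$ non-selected instances while faithfully simulating the domination check for the selected one, and it must not inflate the vertex cover beyond $O(|B|+\log t)$. This is arranged by routing each original edge $(r,b)\in E_i$ through a short cascade of auxiliary leaves and selector-controlled intermediate vertices, with pendant counts calibrated so that every critical component has size exactly $k'$ in the intended scenario and strictly exceeds $k'$ otherwise. Crucially, the intermediate vertices are either leaves (not contributing to the vertex cover) or are shared across branches so that the additional contribution to $\vc(G')$ is $O(\log t)$.

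Once the gadget is in place, the forward direction is witnessed by fixing the selector to the yes-index, taking a dominating set $D\subseteq B$ of size $k$ in that instance, and explicitly setting $S=\{u\}\cup(\text{selector half})\cup D$; a routine case analysis shows that no component of $G'-S$ exceeds $k'$. For the backward direction, any safe set of size at most $k'$ is forced into the above structure, and the $B$-part of $S$ is shown to dominate $R$ in the instance encoded by the selector, since otherwise some branch gadget would yield a component larger than $k'$. The theorem for both SS and CSS now follows from OR-cross-composition applied to NP-hard RBDS, with the universal vertex $u$ ensuring that the produced safe set is connected.
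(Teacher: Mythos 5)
Your overall strategy is genuinely different from the paper's: you attempt an OR-cross-composition from $t$ instances of RBDS, whereas the paper gives a \emph{single-instance} polynomial-parameter transformation from RBDS parameterized by $k+|R|$ (which is already known to admit no polynomial kernel unless $\mathrm{PH}=\Sigma^{\mathrm{p}}_{3}$, by Dom, Lokshtanov and Saurabh) to SS/CSS parameterized by $\vc$, and then invokes the transfer result of Bodlaender et al. The paper's reduction is very light: add a vertex $u$ adjacent to all of $B$, attach $2s$ pendants to $u$ and to each $r\in R$ (forcing these $|R|+1$ vertices into any small safe set), and attach a pendant star $K_{1,s-1}$ to each $r$ with $s=k+|R|+1$; then $\vc \le 2|R|+1$ and the remaining budget of $k$ vertices must be spent in $B$ dominating $R$. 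No instance selector is needed at all.

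The problem with your version is that the entire technical content lives in the branch/selector gadget, and that gadget is never constructed --- only its desired behaviour is asserted. This is a genuine gap, not a routine omission, for two reasons. First, the gadget must let every one of the $(t-1)|R|$ non-selected branches shatter into components of size at most $k'$ after deleting only $u$, the $O(\log t)$ selector vertices and $k$ vertices of $B$, \emph{regardless} of whether the corresponding $r$ is dominated in that instance, while the selected branch must produce an oversized component exactly when $r$ is undominated; with a budget of $k'=1+\lceil\log t\rceil+k$ shared among $t|R|$ gadgets, it is not at all clear that a $\log t$-bit selector can switch all of them coherently (OR-compositions for domination-type problems typically need a selector of size $\Theta(t)$ or a substantially more elaborate encoding). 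Second, the claim $\vc(G')=O(|B|+\log t)$ requires that every internal edge of every branch gadget be covered by $B\cup\{u\}\cup(\text{selector})$; any gadget containing even one edge between two ``private'' non-leaf vertices forces $\Omega(t|R|)$ extra vertices into every vertex cover, and your description of a ``cascade of auxiliary leaves and selector-controlled intermediate vertices'' does not establish that this can be avoided. Until the gadget is written down and both the shattering and the vertex-cover bound are verified, the proof does not go through. (A further small point: the final sentence needs the standard extra argument that SS/CSS are in NP and that the unparameterized source problem is NP-hard, but that is minor compared to the missing construction.)
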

Since for every graph $G$, it is $\csafe(G)/2 \le \safe(G) \le \vc(G)$ (\cite{AguedaCFLMMMNOS_structural}),
the above theorem implies that SS and CSS parameterized by the natural parameters
do not admit a polynomial kernel.
\begin{corollary}
\label{cor:no_poly-kernel}
SS and CSS 
parameterized by solution size do not admit polynomial kernels even for connected graphs
unless $\mathrm{PH} = \Sigma^{\mathrm{p}}_{3}$.
\end{corollary}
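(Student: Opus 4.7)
The plan is to derive the corollary directly from Theorem~\ref{thm:no_poly-kernel} through a polynomial parameter transformation (PPT) from each problem parameterized by $\vc$ to the same problem parameterized by the solution size, using the inequalities $\safe(G) \le \vc(G)$ and $\csafe(G) \le 2\,\safe(G) \le 2\,\vc(G)$ recalled immediately before the corollary.

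First, given a connected instance $(G,k)$ whose relevant parameter is $\vc(G)$, I would compute in polynomial time a $2$-approximate vertex cover to obtain a value $c'$ with $\vc(G) \le c' \le 2\,\vc(G)$. If $k > c'$ in the SS case (respectively $k > 2c'$ in the CSS case), then $k > \safe(G)$ (respectively $k > \csafe(G)$), so the instance is trivially a Yes-instance and we output any fixed small connected Yes-instance with solution-size parameter $k' = 1$. Otherwise, we output $(G,k)$ unchanged. In both branches the output solution-size parameter satisfies $k' \le 2\,\vc(G)$ (respectively $k' \le 4\,\vc(G)$), i.e., it is linearly bounded in the input parameter; correctness is immediate, and connectedness is preserved because we reuse $G$ itself (or an explicit small connected Yes-instance in the trivial branch).

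This map is a PPT from SS parameterized by $\vc$ restricted to connected graphs to SS parameterized by solution size restricted to connected graphs, and analogously for CSS. Invoking the standard principle that a PPT transports the non-existence of polynomial kernels under the assumption $\mathrm{PH} \ne \Sigma^{\mathrm{p}}_{3}$, Theorem~\ref{thm:no_poly-kernel} immediately yields the two statements of the corollary.

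The main obstacle is essentially non-existent; the only delicate point is that $\vc(G)$ is not computable in polynomial time, which is precisely why the $2$-approximation step is needed so that the given $k$ can be capped linearly in $\vc(G)$ without ever computing the vertex cover number exactly. Everything else is a routine application of the PPT framework.
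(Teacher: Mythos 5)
Your proof is correct and takes essentially the same route as the paper, which derives the corollary from Theorem~\ref{thm:no_poly-kernel} in a single line via the inequalities $\csafe(G)/2 \le \safe(G) \le \vc(G)$ --- that is, via the implicit polynomial parameter transformation given by parameter domination. Your version merely makes that transformation explicit and handles the technicality that the input budget $k$ may exceed $\vc(G)$ by using a $2$-approximate vertex cover to cap it, which is a sound (and somewhat more careful) way of filling in the detail the paper leaves implicit.
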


Let $P$ and $Q$ be parameterized problems.
A polynomial-time computable function $f \colon \Sigma^{*} \times N \to \Sigma^{*} \times N$
is a \emph{polynomial parameter transformation} from $P$ to $Q$
if there is a polynomial $p$ such that for all $(x,k) \in \Sigma^{*} \times N$, it is:
$(x,k) \in P$ if and only if $(x',k') = f(x,k) \in Q$, and
$k' \le p(k)$.
If such a function exits, then $P$ is \emph{polynomial-parameter reducible} to $Q$.

\begin{proposition}
[\cite{BodlaenderDFH09}]
\label{prop:ppt-poly-kernel}
Let $P$ and $Q$ be parameterized problems,
and $P'$ and $Q'$ be unparameterized versions of $P$ and $Q$, respectively.
Suppose $P'$ is NP-hard, $Q'$ is in NP,
and $P$ is polynomial-parameter reducible to $Q$.
If $Q$ has a polynomial kernel, then $P$ also has a polynomial kernel.
\end{proposition}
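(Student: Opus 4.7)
The plan is to prove the proposition by pipelining three ingredients: the polynomial parameter transformation $f$ from $P$ to $Q$, the hypothetical polynomial kernelization of $Q$, and a classical (unparameterized) polynomial-time many-one reduction from $Q'$ to $P'$, which exists because $Q' \in \mathrm{NP}$ and $P'$ is NP-hard. The whole point of the statement is that these three maps compose to yield a polynomial kernelization of $P$, once one is careful about how the parameter is treated at the final step.

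Concretely, I would proceed in four steps. First, given an instance $(x,k)$ of $P$, compute $(x',k') = f(x,k)$ in polynomial time using the PPT. By definition of a PPT, we have $(x,k)\in P$ iff $(x',k')\in Q$ and $k' \leqslant p(k)$. Second, apply the assumed polynomial kernel for $Q$ to $(x',k')$ to obtain an equivalent instance $(y,\ell)$ of $Q$ with $|y|+\ell \leqslant q(k')$ for some polynomial $q$; combined with the first step this gives $|y|+\ell \leqslant q(p(k))$, which is polynomial in $k$. Third, use a fixed polynomial-time many-one reduction $g$ from $Q'$ to $P'$ (available by the assumptions $Q'\in\mathrm{NP}$ and $P'$ NP-hard) applied to the classical encoding of $(y,\ell)$ to obtain an instance $x''$ of $P'$ with $|x''|$ polynomially bounded in $|y|+\ell$, and hence polynomially bounded in $k$. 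Fourth, output the parameterized instance $(x'',k'')$ of $P$ where $k'' := |x''|$ (or any trivial polynomial-in-$|x''|$ upper bound on the natural parameter of $P$). Then $(x,k)\in P$ iff $(x',k')\in Q$ iff $(y,\ell)\in Q$ iff $x''\in P'$ iff $(x'',k'')\in P$, and $|x''|+k''$ is polynomial in $k$, so this is a polynomial kernel for $P$.

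The main obstacle, and the only genuinely non-routine point, is step four: the classical reduction $g$ is oblivious to parameters, so we cannot hope for $g$ to produce a well-behaved parameter value on its own. The resolution is precisely to set $k''$ to a trivial upper bound derived from $|x''|$; this is legitimate because kernelization only requires the \emph{total size} of the output, including the parameter, to be polynomial in the input parameter, and does not require $k''$ to be any particular semantic quantity. The asymmetry between $P$ and $Q$ in the hypotheses (NP-hardness required on $P'$ but only NP-membership on $Q'$) is exactly what is needed to make the reduction $g$ go in the direction $Q' \to P'$, closing the loop back to $P$. Everything else, namely composition of polynomial-time algorithms and composition of polynomial size bounds, is routine.
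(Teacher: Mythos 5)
The paper does not prove this proposition; it is imported verbatim from Bodlaender et al.\ \cite{BodlaenderDFH09}, so there is no in-paper argument to compare against. Your proof is the standard composition argument from that literature (PPT, then the kernel of $Q$, then the classical reduction from $Q'$ back to $P'$), and it is correct in substance, including the correct identification of the asymmetry in the hypotheses: NP-hardness of $P'$ and membership of $Q'$ in NP are exactly what make the closing reduction $g\colon Q'\to P'$ available.

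The one place where your write-up is imprecise is step four. You cannot literally \emph{set} $k'' := |x''|$: the parameter of an instance of $P$ is part of the instance and is semantically meaningful, so $(x'',|x''|)$ need not be equivalent to $x''\in P'$. The standard resolution is built into the definition of the unparameterized version: $P'$ is (or may be taken to be) the language $\{\,x\#1^{k} : (x,k)\in P\,\}$ with the parameter appended in unary. Then the string $x''$ produced by $g$ either fails to have this form (in which case it is a no-instance and you output a fixed constant-size no-instance of $P$), or it decodes to a pair $(z,m)$ with $|z|+m\le |x''|$, which you output as the kernel. Since $|x''|$ is polynomial in $k$, so is $|z|+m$, and equivalence is preserved along the whole chain. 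This is a cosmetic repair of your phrasing rather than a missing idea --- you explicitly flagged this as the delicate point and your hedge (``a trivial polynomial-in-$|x''|$ upper bound on the natural parameter'') shows you see why it works --- but the clean justification goes through the unary-encoding convention, not through freely choosing the output parameter.
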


To prove Theorem~\ref{thm:no_poly-kernel},
we present a polynomial-parameter transformation
from the well-known \textsc{Red-Blue Dominating Set} problem (RDBS) to SS (and CSS) parameterized by vertex cover number.
RDBS becomes trivial when $k \ge |R|$ and thus we assume that $k < |R|$ in what follows.
It is known that RBDS parameterized simultaneously by $k$ and $|R|$
does not admit a polynomial kernel unless $\mathrm{PH} = \Sigma^{\mathrm{p}}_{3}$~\cite{DomLS14}.
Since RBDS is NP-hard and SS and CSS are in NP,
it suffices to present a polynomial-parameter transformation
from RBDS parameterized by $k + |R|$ to SS and CSS parameterized by the vertex cover number.

From an instance $[G, k]$ of RBDS with $G = (R,B;E)$, 
we construct an instance $[H, s]$ of SS as follows
 (see \figref{fig:no_poly-kernel}).
Let $s = k + |R| + 1$.
We add a vertex $u$ to $G$ and make it adjacent to all vertices in $B$.
We then attach $2s$ pendant vertices to each vertex in $R$ and to $u$.
Finally, for each $r \in R$, we make a star $K_{1,s-1}$ and add an edge between $r$ and the center of the star.
We call the resultant graph $H$.
Observe that $\vc(H) \le 2|R|+1$
since $\{u\} \cup R \cup C$ is a vertex cover of $H$,
where $C$ is the set of centers of stars attached to $R$.
This reduction is a polynomial-parameter transformation
from RBDS parameterized by $k + |R|$
to SS parameterized by  vertex cover number.

If $D \subseteq V(G)$ is a solution of RBDS of size $k$, 
then $S := \{u\} \cup R \cup D$ is a connected safe set of size $s$.
To see this, recall that $B$ is an independent set and $N_{H}(B) = \{u\} \cup R$.
Thus each component in $H - S$ is either an isolated vertex in $D \setminus B$,
or a star $K_{1,s-1}$ with $s$ vertices.

Assume that $(H,s)$ is a yes instance of SS and let $S$ be a safe set of $H$ with $|S| \le s$.
Observe that $\{u\} \cup R \subseteq S$ since $u$ and all vertices in $R$ have degree at least $2s$.
Since $|S \setminus (\{u\} \cup R)| \le k < |R|$,
$S$ cannot intersect all stars attached to the vertices in $R$.
Hence $H - S$ has a component of size at least $|V(K_{1,s-1})| = s$.
This implies that $S$ is connected.
Since $R$ is an independent set and 
each path from $u$ to a vertex in $R$ passes through $B$,
each vertex in $R$ has to have a neighbor in $B \cap S$.
Thus $B \cap S$ dominates $R$.
Since $B \cap S \subseteq S \setminus (\{u\} \cup R)$, it has size at most $k$.
Therefore, $[G, k]$ is a Yes instance of RBDS.\@
This completes the proof of Theorem~\ref{thm:no_poly-kernel}.

 \begin{figure}[bth]
   \centering
   \includegraphics[scale=1]{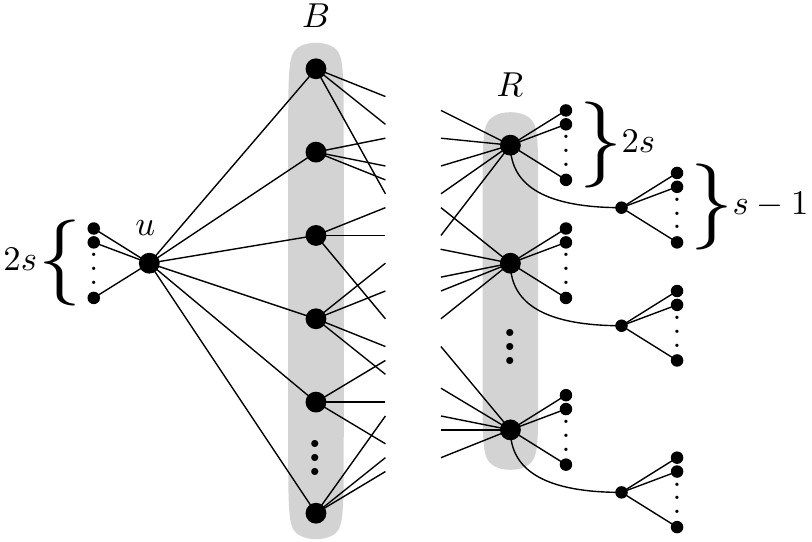}
   \caption{The graph $H$ in our reduction from RBDS to SS for the proof of Theorem~\ref{thm:no_poly-kernel}.}
   \label{fig:no_poly-kernel}
 \end{figure}
 
 
\section{FPT algorithm parameterized by neighborhood diversity}\label{sec_fpt_nd}

In this section, we present FPT algorithms for SS and CSS parameterized by neighborhood diversity.
That is, we prove the following theorem.
\begin{theorem}
\label{thm:nd-fpt}
\textsc{Safe Set} and \textsc{Connected Safe Set} are fixed-parameter tractable when parameterized by the neighborhood diversity.
\end{theorem}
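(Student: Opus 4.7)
The plan is to exploit the fact that a graph $G$ with $\nd(G) = \nu$ admits, in polynomial time, a partition $V(G) = T_1 \cup \cdots \cup T_\nu$ of its vertex set into \emph{types}, where each $T_i$ is either a clique or an independent set and any two vertices in the same $T_i$ have identical neighborhoods outside $T_i$. The first step is a symmetrization observation: if $S$ is a safe set and $u, v \in T_i$, swapping $u$ and $v$ yields another safe set of the same size. Hence it suffices to search over safe sets of the form $S = \bigcup_i T_i'$ with $T_i' \subseteq T_i$, which are fully described by a profile $(s_1, \ldots, s_\nu) \in \prod_i \{0, 1, \ldots, |T_i|\}$, where $s_i := |T_i'|$.

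The next step is to read off the component structure of both $G[S]$ and $G - S$ from the profile. Let $H$ be the \emph{type graph} on vertex set $[\nu]$, with $\{i,j\}$ an edge iff $T_i$ and $T_j$ are completely joined in $G$. Writing $A := \{i : s_i > 0\}$ and $B := \{i : s_i < |T_i|\}$, I would show that each connected component $\mathcal{C}$ of $H[A]$ corresponds to a single component of $G[S]$ of size $\sum_{i \in \mathcal{C}} s_i$, with one exception: if $\mathcal{C} = \{i\}$ and $T_i$ is independent, then $G[S]$ in fact contains $s_i$ singleton components. The analogous statement holds for $G - S$ via $H[B]$ and sizes $|T_i| - s_i$. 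Two components of $G[S]$ and $G-S$ are adjacent in $G$ iff there is an edge of $H$ between their supports in $[\nu]$.

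With this combinatorial picture in hand, the algorithm enumerates the $3^\nu$ possible \emph{statuses} of the types (namely $s_i = 0$, $s_i = |T_i|$, or $0 < s_i < |T_i|$), which fixes $A$, $B$, the component structure and the adjacencies among components. For each status, the safety requirement becomes a system of linear inequalities $|C| \ge |D|$ (each side a linear form in the $s_i$) together with box constraints reflecting the chosen status, and we minimize $\sum_i s_i$. This is an integer linear program in $\nu$ variables and $O(\nu^2)$ constraints, which can be solved in FPT time in $\nu$ by Lenstra's algorithm. For \textsc{Connected Safe Set} the enumeration is further restricted to statuses where $H[A]$ is connected and the ``singleton split'' does not occur, i.e.\ we forbid $A = \{i\}$ with $T_i$ independent and $s_i \ge 2$.

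The step I expect to be the main obstacle is precisely the bookkeeping for the singleton-splitting case: when $\{i\}$ is a component of $H[A]$ with $T_i$ independent, the single status choice produces $s_i$ identical components of size $1$ in $G[S]$, each imposing a constraint $|D| \le 1$ on every neighboring component $D$ of $G - S$, and symmetrically on the $G-S$ side via $H[B]$. One must make sure these constraints are expressed as genuine linear inequalities in the $s_i$ (e.g.\ rewriting $|D|\le 1$ as $\sum_{j \in \mathcal{D}} s_j \ge \sum_{j \in \mathcal{D}} |T_j| - 1$) and that their interaction with the connectedness condition for CSS is correctly captured. Once this case analysis is handled carefully, the rest is a clean reduction to integer linear programming with a bounded number of variables, yielding the claimed FPT running time in $\nd(G)$.
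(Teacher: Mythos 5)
Your proposal is correct and follows essentially the same route as the paper: reduce by twin-symmetry to a profile of intersection sizes $s_i$, enumerate the $3^{\nu}$ empty/partial/full statuses, read off the components of $G[S]$ and $G-S$ from the type graph (with special care for the singleton components arising from isolated independent types, which the paper handles via its Lemmas on reachability and the extra constraints $z_h\le 1$), and solve the resulting bounded-variable integer program with Lenstra's algorithm. The only point to tighten is your adjacency criterion: two components can also be adjacent through a clique type that both supports contain (the paper's ``touching \dots or the same'' clause); for a connected graph with at least two types this case always comes with an $H$-edge between the two supports anyway, so your statement survives, but it needs that one-line argument or the extra clause.
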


In a graph $G = (V,E)$, two vertices $u,v \in V$ are \emph{twins} if $N(u) \setminus \{v\} = N(v) \setminus \{u\}$.
The \emph{neighborhood diversity} $\nd(G)$ of $G = (V,E)$ is the minimum integer $k$ such that
$V$ can be partitioned into $k$ sets $T_{1}, \dots, T_{k}$ of pairwise twin vertices.
It is known that such a minimum partition can be found in linear time using fast modular decomposition algorithms~\cite{McConnellS99,TedderCHP08}.
It is also known that $\nd(G) \le 2^{\vc(G)} + \vc(G)$ for every graph $G$~\cite{Lampis12}.

Let $G$ be a connected graph such that $\nd(G) = k$.
Let $T_{1}, \dots, T_{k}$ be the partition of $V(G)$ into sets of pairwise twin vertices.
Note that each $T_{i}$ is either a clique or an independent set by definition.
We assume that $k \ge 2$ since otherwise the problem becomes trivial.
Since each $T_{i}$ is a twin set, the sizes of intersections $|S \cap T_{i}|$ completely characterize
the sizes of the components in $G[S]$ and $G-S$, and the adjacency among them.

Let $S \subseteq V(G)$ and $s_{i} = |S \cap T_{i}|$ for $i \in [k]$.
We partition $[k]$ into $I_{\mathbf{f}}$, $I_{\mathbf{p}}$, and $I_{\emptyset}$ as follows:
\begin{align}
  i \in 
  \begin{cases}
    I_{\emptyset}  & \text{if $s_{i} = 0$,} \\
    I_{\mathbf{p}} & \text{if $1 \le s_{i} \le |T_{i}|-1$,}  \\
    I_{\mathbf{f}} & \text{otherwise ($s_{i} = |T_{i}|$).} 
  \end{cases} \label{eq:nd-index-partition}
\end{align}
For $i, i' \notin I_{\emptyset}$ (not necessarily distinct),
twin sets $T_{i}$ and $T_{i'}$ are \emph{reachable} in $S$
if either 
\begin{itemize}
  \item $T_{i} = T_{i'}$ and $T_{i}$ is a clique, or

  \item there is a sequence $i_{0}, \dots, i_{\ell}$ of indices
  such that $\ell \ge 1$, $i_{0} = i$, $i_{\ell} = i'$, $i_{j} \notin I_{\emptyset}$ for all $j$,
  and $T_{i_{j}}$ and $T_{i_{j+1}}$ are adjacent for $0 \le j < \ell$.
\end{itemize}
 \begin{lemma}
 \label{lem:nd-single}
 If $i \notin I_{\emptyset}$ and $T_{i}$ is not reachable to $T_{i}$ itself in $S$,
 then each vertex in $T_{i} \cap S$ induces a component of size $1$.
 \end{lemma}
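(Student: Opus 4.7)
The plan is to prove the contrapositive in a direct way: assuming some vertex $v \in T_{i} \cap S$ has a neighbor in $S$, I will exhibit a reachability sequence from $T_{i}$ to $T_{i}$. The key structural fact I will exploit is that, since $T_{i}$ and $T_{j}$ are twin classes for $j \ne i$, either every vertex of $T_{i}$ is adjacent to every vertex of $T_{j}$, or there are no edges at all between them; and within a single class $T_{i}$, either all pairs are adjacent (if $T_{i}$ is a clique) or none are (if $T_{i}$ is an independent set).

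First I would fix $v \in T_{i} \cap S$ and assume, for contradiction, that $v$ has a neighbor $w \in S$. I split into two cases according to whether $w \in T_{i}$ or $w \in T_{j}$ for some $j \ne i$. In the first case, the adjacency $\{v,w\}$ forces $T_{i}$ to be a clique, so by the first clause of the definition of reachability $T_{i}$ is reachable to itself in $S$, a contradiction. In the second case, by the twin structure recalled above, every vertex of $T_{i}$ is adjacent to every vertex of $T_{j}$, so $T_{i}$ and $T_{j}$ are adjacent in the sense of the preamble. Moreover, $w \in S \cap T_{j}$ gives $s_{j} \ge 1$, hence $j \notin I_{\emptyset}$.

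Then I produce the explicit reachability sequence $i_{0} = i$, $i_{1} = j$, $i_{2} = i$, of length $\ell = 2$. All three indices lie outside $I_{\emptyset}$ (using $i \notin I_{\emptyset}$ by hypothesis and $j \notin I_{\emptyset}$ as argued), and both consecutive pairs $(T_{i},T_{j})$ and $(T_{j},T_{i})$ are adjacent. This matches the second clause of the definition of reachability and shows that $T_{i}$ is reachable to $T_{i}$ in $S$, contradicting the hypothesis of the lemma. Hence $v$ has no neighbor in $S$, so $\{v\}$ is a component of $G[S]$, as required.

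I do not expect a genuine obstacle here; the argument is a careful unpacking of the two clauses of the reachability definition together with the complete bipartite / empty dichotomy between twin classes. The only subtlety is remembering that the first clause of reachability handles the case where $w$ lies in $T_{i}$ itself, which is why the definition was phrased to allow $T_{i} = T_{i'}$ only when $T_{i}$ is a clique.
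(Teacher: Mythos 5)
Your proof is correct and is essentially the paper's argument read contrapositively: the paper directly observes that non-reachability forces $T_i$ to be an independent set with all adjacent twin classes in $I_{\emptyset}$, hence no vertex of $T_i \cap S$ has a neighbor in $S$, which is exactly the case analysis you carry out by contradiction. The explicit sequence $i, j, i$ you exhibit is the same reachability witness implicit in the paper's one-line justification.
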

 \begin{proof}
 The assumption implies that $T_{i}$ is an independent set
 and $i' \in I_{\emptyset}$ for each $T_{i'}$ adjacent to $T_{i}$.
 Therefore, each vertex in $T_{i} \cap S$ has no other vertex in $S$
 that belongs to the same component of $G[S]$.
  
 \end{proof}
 
 \begin{lemma}
 \label{lem:nd-nonsingle}
 Two vertices $u, v \in S$ are in the same component of $G[S]$
 if and only if 
 $u \in T_{i}$ and $v \in T_{i'}$ for some $i$ and $i'$, and $T_{i}$ is reachable from $T_{i'}$ in $S$.
 \end{lemma}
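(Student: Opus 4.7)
The plan is to prove both implications by translating between paths in $G[S]$ and reachability sequences of twin-set indices, exploiting the defining property of twins: any two vertices in the same class have identical neighborhoods outside that class, so adjacency between two twin sets is all-or-nothing.

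For the forward direction, I would take a path $u = p_0, p_1, \dots, p_m = v$ in $G[S]$ and let $a_j$ denote the index with $p_j \in T_{a_j}$; since $p_j \in S$, we have $a_j \notin I_{\emptyset}$ for every $j$. For each edge $p_j p_{j+1}$, either $a_j \neq a_{j+1}$, in which case $T_{a_j}$ and $T_{a_{j+1}}$ are adjacent twin sets by twin uniformity, or $a_j = a_{j+1}$, which forces $T_{a_j}$ to be a clique (twin sets are cliques or independent sets, and here $T_{a_j}$ contains the edge $p_j p_{j+1}$). Contracting maximal runs of equal consecutive indices in $a_0,\dots,a_m$ yields a candidate sequence from $i := a_0$ to $i' := a_m$ whose consecutive entries are distinct and correspond to adjacent twin sets. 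If $i \neq i'$ this contracted sequence has length $\ell \ge 1$ and directly witnesses the second clause of reachability. If $i = i'$, then either the whole path stays in $T_i$, whence $T_i$ is a clique and the first clause applies, or the path leaves and re-enters $T_i$, giving a contracted sequence of length $\ge 2$ from $i$ back to $i$.

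For the backward direction, I would split on the two clauses of the definition. If $T_i = T_{i'}$ is a clique, then $u, v \in T_i \cap S$ are equal or joined by the clique edge $uv$, so they lie in the same component of $G[S]$. Otherwise, fix a witness sequence $i_0, \dots, i_\ell$ with $\ell \ge 1$, $i_0 = i$, $i_\ell = i'$, each $i_j \notin I_{\emptyset}$, and consecutive $T_{i_j}, T_{i_{j+1}}$ adjacent. For $0 < j < \ell$ choose any representative $w_j \in T_{i_j} \cap S$ (nonempty because $i_j \notin I_{\emptyset}$). By twin uniformity, every vertex of $T_{i_j}$ is adjacent to every vertex of $T_{i_{j+1}}$ when those sets are adjacent, so setting $w_0 := u$ and $w_\ell := v$ yields a walk $u, w_1, \dots, w_{\ell-1}, v$ in $G[S]$ (degenerating to the single edge $uv$ when $\ell = 1$).

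The main technical obstacle is the boundary bookkeeping: ensuring that the contracted sequence in the forward direction legitimately has length $\ge 1$ exactly when the first clause of reachability does not already apply, and verifying in the backward direction that the substitution of $u$ and $v$ for arbitrary representatives at the endpoints is justified purely by the twin property (which it is, since $u$ shares the full external neighborhood of any $w_0 \in T_{i_0}$, and similarly for $v$). Everything else is a direct consequence of the all-or-nothing adjacency between distinct twin sets.
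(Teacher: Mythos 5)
Your proposal is correct and follows essentially the same route as the paper: one direction builds a walk in $G[S]$ from a reachability sequence by picking representatives $w_j \in T_{i_j}\cap S$ and invoking the all-or-nothing adjacency of distinct twin sets (the paper phrases this as an induction on $\ell$), and the other direction reads off an index sequence from a $u$--$v$ path, with your contraction of maximal constant runs playing the same role as the paper's use of a shortest path to force consecutive indices to differ. The boundary cases you flag (path staying inside a clique twin set versus leaving and re-entering) are handled correctly, so there is no gap.
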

 \begin{proof}
 If $u,v \in T_{i}$ for some $i$ where $T_{i}$ is a clique,
 then $\{u,v\} \in E(G)$ and thus they are in the same component of $G[S]$.
 Assume that $u \in T_{i}$ and $v \in T_{i'}$, and
 there is a sequence $i_{0}, \dots, i_{\ell}$ with $\ell \ge 1$ that shows the reachability.
 We show that $u,v$ are in the same component of $G[S]$ by induction on $\ell$.
 If $\ell = 1$, then $T_{i}$ and $T_{i'}$ are adjacent and we are done.
 Let $\ell > 1$. Since $i_{\ell-1} \notin I_{\emptyset}$, there is a vertex $w \in T_{\ell-1} \cap S$.
 By the induction hypothesis, $u$ and $w$ are in the same component of $G[S]$.
 Furthermore, since $T_{\ell-1}$ and $T_{\ell} = T_{i'}$ are adjacent,
 there is an edge between $w$ and $v$. Hence $u$ and $v$ are in the same component of $G[S]$.
 
 Now assume that $u \in T_{i}$ and $v \in T_{i'}$ are in the same connected connected of $G[S]$.
 Then there is a shortest $u$--$v$ path $(p_{0}, \dots, p_{q})$ in $G[S]$.
 If $T_{i} = T_{i'}$ and $T_{i}$ is a clique, then $p_{0}, p_{q} \in T_{i}$.
 Assume that this is not the case. Let $T_{i_{j}}$ be the twin set including $p_{j}$.
 Since the path is shortest, $i_{j} \ne i_{j+1}$ holds
 and $T_{j}$ is adjacent to $T_{j+1}$ for each $j < q$.
 Also, because of $p_{j}$, we have $i_{j} \notin I_{\emptyset}$ for each $j$.
 Therefore, $T_{i}$ and $T_{i'}$ are reachable in $S$.
\end{proof}

By Lemmas~\ref{lem:nd-single} and \ref{lem:nd-nonsingle},
each component of $G[S]$ is either a single vertex,
or the intersection of $S$ and the union of a maximal family of pairwise reachable twin sets in $S$.
Observe that the maximal families of pairwise reachable twin sets in $S$
is determined only by the set $I_{\emptyset}$.
Also, if $R$ is a maximal family of pairwise reachable twin sets in $S$,
then the corresponding component of $G[S]$ has size $\sum_{T_{i} \in R} |T_{i} \cap S|$.

Now, just by interchanging the roles of $S$ and $V(G) \setminus S$ in Lemmas~\ref{lem:nd-single} and \ref{lem:nd-nonsingle},
we can show the following counterparts that 
imply that the maximal families of pairwise reachable twin sets in $V(G) \setminus S$
are determined only by the set $I_{\mathbf{f}}$,
while the size of the component of $G-S$ corresponding to a maximal family $R$ of pairwise reachable twin sets in $V(G) \setminus S$
is $\sum_{T_{i} \in R} |T_{i} \setminus S|$.
\begin{lemma}
\label{lem:nd-single-co}
If $i \notin I_{\mathbf{f}}$ and $T_{i}$ is not reachable to $T_{i}$ itself in $V(G) \setminus S$,
then each vertex in $T_{i} \setminus S$ induces a component of size $1$.
\end{lemma}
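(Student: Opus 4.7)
The plan is to mirror the proof of Lemma~\ref{lem:nd-single} line-for-line, swapping the roles of $S$ and $V(G)\setminus S$. In particular, the partition index $I_{\emptyset}$ (classes absent from $S$) should be replaced by $I_{\mathbf{f}}$ (classes entirely contained in $S$, i.e.\ absent from $V(G)\setminus S$), and the notion of reachability in $V(G)\setminus S$ is defined by the obvious dual of the reachability definition in $S$: a sequence $i_{0},\dots,i_{\ell}$ with $i_{j}\notin I_{\mathbf{f}}$ for all $j$ (so that each intermediate twin class has a nonempty trace outside $S$) and with $T_{i_{j}}$ adjacent to $T_{i_{j+1}}$.

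Under this dictionary, I would carry out two observations. First, from the hypothesis that $T_{i}$ is not reachable to itself in $V(G)\setminus S$, the first clause of the reachability definition forces $T_{i}$ not to be a clique; hence $T_{i}$ is an independent set, and $T_{i}\setminus S$ contains no internal edges. Second, for any twin class $T_{i'}$ adjacent to $T_{i}$, we must have $i'\in I_{\mathbf{f}}$: otherwise the length-$2$ sequence $i_{0}=i$, $i_{1}=i'$, $i_{2}=i$ would witness that $T_{i}$ is reachable to itself in $V(G)\setminus S$, contradicting the hypothesis.

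Combining the two observations, every vertex $v\in T_{i}\setminus S$ has no neighbour in $V(G)\setminus S$: its neighbours either lie in $T_{i}$ (of which there are none, since $T_{i}$ is independent) or in some adjacent twin class $T_{i'}$, and each such $T_{i'}$ satisfies $T_{i'}\subseteq S$ by the second observation. Therefore $v$ is isolated in $G-S$ and induces a component of size $1$, which is what we want.

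The only subtlety, and hence the only place where care is required, is making the duality between the two notions of reachability fully rigorous; once this is stated cleanly, the argument contains no new ideas beyond those already used in Lemma~\ref{lem:nd-single}, and so no genuine obstacle is expected.
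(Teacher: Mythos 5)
Your proposal is correct and matches the paper's approach exactly: the paper proves Lemma~\ref{lem:nd-single-co} precisely by "interchanging the roles of $S$ and $V(G)\setminus S$" in the proof of Lemma~\ref{lem:nd-single}, which argues that the hypothesis forces $T_{i}$ to be an independent set and every adjacent twin class to lie entirely in $S$, so each vertex of $T_{i}\setminus S$ is isolated in $G-S$. The dual reachability notion you spell out is the intended one, and your two observations are the same two steps as in the paper.
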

\begin{lemma}
\label{lem:nd-nonsingle-co}
Two vertices $u, v \in V(G) \setminus S$ are in the same component of $G - S$
if and only if 
$u \in T_{i}$ and $v \in T_{i'}$ for some $i$ and $i'$, and $T_{i}$ is reachable from $T_{i'}$ in $V(G) \setminus S$.
\end{lemma}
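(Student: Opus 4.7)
The plan is to prove Lemma~\ref{lem:nd-nonsingle-co} by mirroring the argument of Lemma~\ref{lem:nd-nonsingle} with the roles of $S$ and $V(G)\setminus S$ exchanged. The natural dualization replaces the excluded-index set $I_{\emptyset}$ (indices $i$ with $T_i \cap S = \emptyset$) by $I_{\mathbf{f}}$ (indices $i$ with $T_i \setminus S = \emptyset$), so a reachability sequence in $V(G)\setminus S$ is a sequence $i_0,\ldots,i_\ell$ with consecutive twin sets adjacent and $i_j \notin I_{\mathbf{f}}$ for every $j$, guaranteeing at least one vertex of each intermediate twin set available outside $S$.

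For the forward direction I would first handle the case $u,v \in T_i$ with $T_i$ a clique: both lie in $V(G)\setminus S$, so the edge $\{u,v\}$ persists in $G-S$. Otherwise I would induct on the length $\ell$ of the reachability sequence. The base case $\ell=1$ holds because twin-set adjacency means every vertex of $T_{i_0}$ is adjacent in $G$ to every vertex of $T_{i_1}$, and both endpoints stay outside $S$. For the inductive step I would pick a witness $w \in T_{i_{\ell-1}} \setminus S$, which exists precisely because $i_{\ell-1} \notin I_{\mathbf{f}}$; the inductive hypothesis places $u$ and $w$ in the same component of $G-S$, and $w$ is joined to $v$ by an edge of $G-S$ since $T_{i_{\ell-1}}$ and $T_{i_\ell}$ are adjacent twin sets.

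For the backward direction I would take a shortest $u$--$v$ path $(p_0,\ldots,p_q)$ in $G-S$. If $u$ and $v$ lie in a common clique twin set, the first clause of the reachability definition applies directly. Otherwise, letting $T_{i_j}$ be the twin set containing $p_j$, the fact that the path stays in $V(G)\setminus S$ forces $i_j \notin I_{\mathbf{f}}$ for all $j$, the shortest-path property gives $i_j \ne i_{j+1}$, and consecutive path-vertices being adjacent in $G$ forces their distinct twin sets to be adjacent, producing the required reachability sequence. No step looks difficult: once the reachability notion is dualized via~\eqref{eq:nd-index-partition} the argument is perfectly symmetric to Lemma~\ref{lem:nd-nonsingle}, and the only bookkeeping to watch is verifying that a vertex of $V(G)\setminus S$ lies in $T_i$ only when $i \notin I_{\mathbf{f}}$, which is immediate from the three-way partition of $[k]$.
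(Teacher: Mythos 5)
Your proposal is correct and matches the paper's intent exactly: the paper proves Lemma~\ref{lem:nd-nonsingle-co} by simply noting it follows from Lemma~\ref{lem:nd-nonsingle} by interchanging the roles of $S$ and $V(G)\setminus S$ (and hence of $I_{\emptyset}$ and $I_{\mathbf{f}}$), which is precisely the dualization you spell out. Your write-up just makes that symmetric argument explicit.
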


\subsection*{ILP formulation}

Now we reduce the problem to an FPT number of integer linear programs with a bounded number of variables.
We first divide $[k]$ into the subsets $I_{\emptyset}$, $I_{\mathbf{p}}$, $I_{\mathbf{f}}$ in Eq.~\eqref{eq:nd-index-partition}.
There are $3^{k}$ candidates for such a partition.

For each $i \in [k]$, we use a variable $x_{i}$ to represent the size of $T_{i} \cap S$.
To find a minimum safe set satisfying $I_{\emptyset}$, $I_{\mathbf{p}}$, $I_{\mathbf{f}}$,
we set the objective function to be $\sum_{i \in [k]} x_{i}$
and minimize it subject to the following linear constraints.
The first set of constraints is to make $S$ consistent with the guess of $I_{\emptyset}$, $I_{\mathbf{p}}$, $I_{\mathbf{f}}$:
\begin{align*}
  x_{i} = 0 & \quad \text{for $i \in I_{\emptyset}$},\\
  1 \le x_{i} \le |T_{i}|-1 & \quad \text{for $i \in I_{\mathbf{p}}$},\\
  x_{i} = |T_{i}| & \quad \text{for $i \in I_{\mathbf{f}}$}.
\end{align*}

As discussed above, the set of sizes $x_{i}$ completely characterizes
the structure of components in $G[S]$ and $G - S$.
In particular, we can decide whether $G[S]$ is connected or not at this point.
We reject the disconnected case if we are looking for a connected safe set.

Let $\mathcal{C}$ and $\mathcal{D}$
be the sets of maximal families of pairwise reachable twin sets in $S$ and $V(G) \setminus S$, respectively.
Note that the twin sets that satisfy the conditions of Lemma~\ref{lem:nd-single} (Lemma~\ref{lem:nd-single-co})
are not included in any member of $\mathcal{C}$ ($\mathcal{D}$, respectively).

For each $C_{j} \in \mathcal{C}$, we use a variable $y_{j}$ to represent the size of the corresponding component of $G[S]$.
Also, for each $D_{h} \in \mathcal{D}$, we use a variable $z_{h}$ to represent the size of the corresponding component of $G - S$.
This can be stated as follows:
\begin{align*}
  y_{j} = \sum_{T_{i} \in C_{j}} x_{i}  & \quad \text{for $C_{j} \in \mathcal{C}$}, \\
  z_{h} = \sum_{T_{i} \in D_{h}} |T_{i}| - x_{i}  & \quad \text{for $D_{h} \in \mathcal{D}$}. 
\end{align*}

We say that $C_{j} \in \mathcal{C}$ and $D_{h} \in \mathcal{D}$ are \emph{touching}
if there are $T \in C_{j}$ and $T' \in D_{h}$ that are adjacent, or the same.
We can see that $C_{j}$ and $D_{h}$ are touching if and only if the corresponding components are adjacent
via an edge from $T$ and $T'$ or an edge completely in $T = T'$. 
We add the following constraint to guarantee the safeness of $S$:
 \begin{align*}
   y_{j} \ge z_{h} & \quad \text{for each pair of touching $C_{j} \in \mathcal{C}$ and $D_{h} \in \mathcal{D}$}. 
 \end{align*}

Now we have to deal with the singleton components of $G[S]$ (we can ignore the singleton components of $G - S$
because the components of $G[S]$ adjacent to them have size at least 1).
Let $T_{i}$ be a twin set that satisfies the conditions of Lemma~\ref{lem:nd-single}.
That is, $T_{i} \cap S \ne \emptyset$, $T_{i}$ is an independent set,
and no twin set adjacent to $T_{i}$ has a non-empty intersection with $S$.
Hence a component of $G - S$ is adjacent to the singleton components in $T_{i} \cap S$,
if and only if the corresponding family $D_{h} \in \mathcal{D}$ includes a twin set adjacent to $T_{i}$.
We say that such $D_{h}$ is \emph{adjacent to} $T_{i}$.
Therefore, we add the following constraint:
 \begin{align*}
   z_{h} \le 1 & \quad
   \text{for each $D_{h} \in \mathcal{D}$ adjacent to $T_{i}$ satisfying Lemma~\ref{lem:nd-single}.}
 \end{align*}

\subsection*{Solving the ILP}

Lenstra~\cite{Lenstra83} showed that the feasibility of an ILP formula can be decided in FPT time
when parameterized by the number of variables (see also~\cite{Kannan87,FrankT87}).
Fellows et al.~\cite{FellowsLMRS08} extended it to the optimization version.
More precisely, we define the problem as follows:
\begin{myproblem}
  \problemtitle{\textsc{$p$-Opt-ILP}}
  \probleminput{A matrix $A \in \mathbb{Z}^{m \times p}$, and vectors $b \in \mathbb{Z}^{m}$ and $c \in \mathbb{Z}^{p}$.}
  \problemquestion{Find a vector $x \in \mathbb{Z}^{p}$ that minimizes $c^{\top} x$ and satisfies that $A x \ge b$.}
\end{myproblem}
\noindent
They then showed the following:
\begin{theorem}
[Fellows et al.~\cite{FellowsLMRS08}]
\label{thm:ilp-opt}
\textsc{$p$-Opt-ILP} can be solved using $O(p^{2.5 p + o(p)} \cdot L \cdot \log(MN))$ arithmetic operations
and space polynomial in $L$, where $L$ is the number of bits in the input,
$N$ is the maximum absolute values any variable can take,
and $M$ is an upper bound on the absolute value of the minimum taken by the objective function.
\end{theorem}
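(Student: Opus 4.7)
The plan is to reduce \textsc{$p$-Opt-ILP} to $O(\log(MN))$ calls of an ILP \emph{feasibility} oracle on $p$ variables, and to invoke the Lenstra--Kannan--Frank-Tardos feasibility algorithm (as cited in the text) to implement each call in $p^{2.5p+o(p)} \cdot L^{O(1)}$ arithmetic operations and polynomial space. Throughout, whenever I augment the constraint system with a new single-row inequality, its coefficients will have bit-length $O(\log(MN)) = L^{O(1)}$, so the ``$L$'' parameter of the augmented system grows by only a constant factor from step to step.

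First I would locate the optimal objective value. Since $|c^\top x^*| \le M$ at any optimum, I do a binary search over $t \in [-M, M]$: at each step, ask the feasibility oracle whether the system $\{Ax \ge b,\; -c^\top x \ge -t\}$ is feasible. This identifies the minimum feasible value $t^*$ in $\lceil \log_2 (2M+1) \rceil = O(\log M)$ calls.

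Next I would recover an optimal solution $x^*$ by fixing coordinates one at a time. Install the extra constraint $c^\top x \le t^*$. Then for each $i = 1, \dots, p$ in turn, binary-search over $v \in [-N, N]$ for a value of $x_i$ such that the system, augmented by $x_1 = x_1^*, \dots, x_{i-1} = x_{i-1}^*$ and $x_i = v$, is still feasible; when found, set $x_i^* := v$ and lock it in as an equality. Each coordinate costs $O(\log N)$ feasibility calls, for a total of $O(p \log N)$.

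Summing, the procedure uses $O(\log M + p \log N)$ feasibility calls, each at cost $p^{2.5p+o(p)} \cdot L^{O(1)}$. Because $p = p^{o(p)}$, the leading $p$ factor from the coordinate loop is absorbed into the exponent, yielding total cost $O(p^{2.5p+o(p)} \cdot L \cdot \log(MN))$ arithmetic operations. The space bound $L^{O(1)}$ is inherited from the feasibility subroutine, since the binary-search state is trivially polynomially bounded. The main obstacle is the cited feasibility black box: one has to verify that appending a single extra row (and later a few equalities) to the constraint matrix leaves Kannan's bound intact, and that its running time is truly linear --- rather than superlinear --- in the encoding length $L$, which is what gives the clean factor of $L$ (as opposed to $L^{O(1)}$) in the final expression.
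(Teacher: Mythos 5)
This theorem is imported verbatim from Fellows et al.~\cite{FellowsLMRS08}; the paper offers no proof of its own, so there is no in-paper argument to compare yours against. Your reconstruction --- reduce the optimization problem to $O(\log M + p\log N)$ feasibility queries via binary search on the objective value and then on each coordinate, implementing each query with the Lenstra--Kannan--Frank--Tardos feasibility algorithm --- is the standard route and matches how the result is derived in the source.

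One step as written does not work: in the solution-recovery phase you binary-search for a value $v$ such that the system augmented with the \emph{equality} $x_i = v$ is feasible. The set of such $v$ is the projection of a non-convex integer feasible region onto one coordinate and need not be an interval, so the predicate you are searching is not monotone and binary search over it is not justified. The standard fix is to search for the minimum $v$ such that the system with the \emph{inequality} $x_i \le v$ is feasible: this predicate is monotone in $v$, and the minimizing $v$ is attained with equality by some feasible point, so you may then fix $x_i = v$ and proceed to the next coordinate. With that repair, the call count, the absorption of the extra factor of $p$ into $p^{o(p)}$, and the final bound all check out --- modulo the caveat you already flag yourself, namely that the clean factor of $L$ (rather than $L^{O(1)}$) is only as good as the linear-in-$L$ form of the cited feasibility black box.
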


In the formulation for SS and CSS, we have at most $O(k)$ variables:
$x_{i}$ for $i \in [k]$,
$y_{j}$ for $C_{j} \in \mathcal{C}$, and
$z_{h}$ for $D_{h} \in \mathcal{D}$.
Observe that the elements of $\mathcal{C}$ (and of $\mathcal{D}$ as well) are pairwise disjoint.
We have only $O(k^{2})$ constraints
and the variables and coefficients can have values at most $|V(G)|$.
Therefore, Theorem~\ref{thm:nd-fpt} holds.
 

\section{XP algorithm parameterized by clique-width}\label{sec_xp_cw}

This section presents an XP-time algorithm for SS and CSS parameterized by clique-width.
The algorithm runs in time $O(g(c) \cdot n^{f(c)})$, where $c$ is the clique-width. It is known that for any constant $c$,
  one can compute a $(2^{c+1}-1)$-expression of a graph of clique-width $c$ in $O(n^{3})$ time~\cite{HlinenyO08,OumS06,Oum08}.
 We omit $g(c)$ in the running time and focus on the exponent $f(c)$.
 
%
  
 %
 
 \begin{theorem}
\label{thm:cliquewidth-xp}
Given an $n$-vertex graph $G$ and an irredundant $c$-expression $T$ of $G$,
the values of $\safe(G)$ and $\csafe(G)$,
along with their corresponding sets can be computed in $O(n^{28 \cdot 2^{c} + 1})$ time.
\end{theorem}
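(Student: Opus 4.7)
The plan is to perform bottom-up dynamic programming along the given irredundant $c$-expression $T$. For a node $t$ of $T$ let $G_t$ denote the labeled graph it represents; a DP entry at $t$ records an equivalence class of partial solutions $S \subseteq V(G_t)$ together with the minimum $|S|$ achieving that class, and the answer is read off at the root. The key structural observation is that two vertices carrying the same current label are indistinguishable under every future operation, so for each connected component $C$ of $G_t[S]$ or of $G_t - S$ only two data are relevant to its future: its current size $|C|$ and its \emph{label pattern} $L(C) \subseteq [c]$, namely the set of labels appearing in $C$. Future $\eta_{i,j}$ operations merge two components precisely when their patterns intersect $\{i,j\}$ appropriately, so the multiset of (pattern, size, side) triples is what the DP must in principle carry.

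Because storing the full multiset would be too expensive, for each of the $2^c$ label patterns $L$ and each side $\sigma \in \{\mathrm{in}, \mathrm{out}\}$ we instead maintain only a constant-size summary of the sizes of components of type $(L, \sigma)$. Following the seven-field bundle hinted at by the macros $\num, \bnum, \smllst, \lrgst, \mindiff, \asmllst, \alrgst$ (total count, a secondary count, smallest and largest sizes, minimum pairwise size gap, and two fallback extremes that take over once the current extremes are absorbed by a future merger), together with the different roles these extremes play on each side, we track at most $28$ integers per pattern, each in $[0,n]$. This bounds the number of DP states per node by $n^{28 \cdot 2^c}$. Updates at $\circ_i$-nodes are immediate (a single size-one component on the chosen side) and at $\cup$-nodes the two child summaries combine additively. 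At a $\rho_{i,j}$-node, patterns that become identical after relabeling are merged and their seven-field summaries combined accordingly. The essential case is $\eta_{i,j}$: on each side, every component whose pattern contains $i$ is merged with every component whose pattern contains $j$, and across sides the safety inequality $|C| \ge |D|$ must be verified for every newly adjacent in/out pair. Irredundancy of $T$ guarantees that each edge is introduced at exactly one $\eta$-node, so each safety comparison is performed exactly once, and it reduces to comparing $\smllst$ of the in-side summary against $\lrgst$ of the out-side summary over touching pattern pairs.

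For CSS we additionally reject at the root any entry whose in-side summary shows more than one nonempty component, using the $\num$ fields to decide connectivity once no further merging can occur. Witness sets are recovered in the standard way, by storing back-pointers from each DP entry to the child entries that produced it and tracing them down to the $\circ_i$-leaves. The main obstacle is proving that these seven statistics per type really form a sufficient summary, i.e.\ that after \emph{any} sequence of future $\eta$-operations every triggered safety check is resolvable from the stored data and the stored data are themselves updatable from the data alone. This is shown by induction on the number of remaining mergers, with the invariant that after one more $\eta_{i,j}$ the new smallest and largest of each affected type are computable from the current extremes, the new fallback extremes are inherited from the old $\asmllst, \alrgst$, and $\mindiff$ can be recomputed from the new extremes and the old gaps without loss of information, since any gap involving an absorbed component cannot strictly decrease the minimum gap further. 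The overall running time is then $O(n^{28 \cdot 2^c})$ states multiplied by polynomial update cost, giving the claimed $O(n^{28 \cdot 2^c + 1})$ bound.
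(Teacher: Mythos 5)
Your overall architecture (bottom-up DP over the $c$-expression, classifying components of $G_t[S]$ and $G_t-S$ by their label sets and keeping only size statistics per class) matches the paper's. But there are two concrete points where your proposal would fail. First, you claim that irredundancy lets each safety comparison be ``performed exactly once'' at the $\eta_{i,j}$-node that creates the adjacency, by comparing $\smllst$ against $\lrgst$ there. This is not sound: after $C$ and $D$ become adjacent, both can still grow at later $\eta$-nodes (each merge replaces a component by one whose size is the sum of the merged parts), so $|C|\ge|D|$ at edge-creation time is neither necessary nor sufficient for $|C|\ge|D|$ at the root. The condition must be deferred to the root, which is exactly why the paper carries $\mindiff(L_1,L_2)=\min\{|C|-|D|\}$ over currently adjacent pairs and only requires $\mindiff\ge 0$ at the end. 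Second, your adjacency statistics are indexed by a single label pattern (``28 integers per pattern''), whereas the paper indexes $\asmllst$, $\alrgst$, $\mindiff$ by \emph{pairs} $(L_1,L_2)$ of label sets. The pair-indexing is essential for updatability: when a later $\eta_{i,j}$ merges only the out-side components whose label sets meet $\{i,j\}$, you must know, for each stored gap, the label set of the out-component involved in order to decide whether that gap shrinks; a per-pattern aggregate cannot be maintained. Relatedly, your key invariant ``any gap involving an absorbed component cannot strictly decrease the minimum gap further'' is false --- absorbing $D$ into a larger $D''$ strictly decreases $|C|-|D|$ --- so the induction you sketch for sufficiency of the summary does not go through.

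Since the sufficiency of the stored statistics and their closure under the four node types is precisely the technical heart of the theorem (the paper's $\eta$-node case is a long case analysis over which of $L_1,L_2$ get merged and whether $\num(\tau,i)\cdot\num(\tau,j)$ or $\bnum(\tau,i)\cdot\bnum(\tau,j)$ vanishes), asserting it with an incorrect invariant leaves the proof with a genuine gap rather than a fixable omission. The fix is to adopt pair-indexed $\asmllst$, $\alrgst$, $\mindiff$, postpone all safety tests to the root, and verify explicitly that these quantities are recomputable from themselves under $\cup$, $\rho_{i,j}$ and $\eta_{i,j}$.
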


\begin{corollary}
\label{cor:cliquewidth-xp}
Given an $n$-vertex graph $G$,
the values of $\safe(G)$ and $\csafe(G)$,
along with their corresponding sets can be computed in time 
$n^{O(f(\cw(G)))}$, where $f(c) = 2^{2^{c+1}}$.
\end{corollary}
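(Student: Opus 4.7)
The plan is to reduce the statement directly to Theorem~\ref{thm:cliquewidth-xp} by first producing an irredundant expression of $G$ from scratch. First, I would invoke the Hliněný–Oum algorithm cited in Section~\ref{sec:prel}, which in $O(n^{3})$ time outputs a $(2^{\cw(G)+1}-1)$-expression of $G$. I would then apply the linear-time Courcelle–Olariu transformation to turn this into an irredundant $c$-expression with $O(n)$ nodes, setting $c := 2^{\cw(G)+1}-1$.

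Next, I feed this irredundant $c$-expression into Theorem~\ref{thm:cliquewidth-xp}. The theorem returns $\safe(G)$, $\csafe(G)$, and the corresponding sets in time $O(n^{28 \cdot 2^{c} + 1})$. Substituting $c = 2^{\cw(G)+1}-1$ gives $2^{c} = 2^{2^{\cw(G)+1}-1} = \tfrac{1}{2}\,f(\cw(G))$ with $f(c) = 2^{2^{c+1}}$, so the exponent becomes $28 \cdot 2^{c} + 1 = 14\,f(\cw(G)) + 1 = O(f(\cw(G)))$, matching the claimed $n^{O(f(\cw(G)))}$ bound.

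Since both ingredients (the clique-width approximation and Theorem~\ref{thm:cliquewidth-xp}) are imported wholesale, there is no genuine obstacle in the corollary itself; the only thing to check is the arithmetic above and the observation that the multiplicative factors depending only on $\cw(G)$ from each subroutine can be absorbed into the $n^{O(f(\cw(G)))}$ term in the standard XP fashion (for $n$ smaller than such a function of $\cw(G)$ the problem is trivially solved by brute force). The hardest conceptual work is thus packaged inside Theorem~\ref{thm:cliquewidth-xp}, which is already assumed.
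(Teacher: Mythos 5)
Your proposal is correct and matches the paper's (implicit) derivation exactly: apply the cited $O(n^3)$-time algorithm to obtain a $(2^{\cw(G)+1}-1)$-expression, make it irredundant via the linear-time transformation, and invoke Theorem~\ref{thm:cliquewidth-xp} with $c = 2^{\cw(G)+1}-1$, so that $28\cdot 2^{c}+1 = 14\,f(\cw(G))+1 = O(f(\cw(G)))$. The arithmetic and the absorption of $\cw$-dependent factors are both fine.
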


 For each node $t$ in the $c$-expression $T$ of $G$, 
 let $G_{t}$ be the vertex-labeled graph represented by $t$.
 We denote by $V_{t}$ the vertex set of $G_{t}$.
 For each $i$, we denote the set of $i$-vertices in $G_{t}$ by $V_{t}^{i}$.
 For sets $S \subseteq V_{t}$ and $L \subseteq [c]$,
 we denote by $\mathcal{C}(S,L)$ and $\mathcal{D}(S,L)$ the set of components of $G_{t}[S]$ and $G_{t}-S$, respectively,
 that include exactly the labels in $L$.
 
 For each node $t$ in $T$, we construct a table 
 $\DP_{t}(\num, \bnum, \smllst, \lrgst, \asmllst, \alrgst, \mindiff) \in \{\true, \false\}$
 with indices 
 $\num, \bnum \colon 2^{[c]} \to \{0,\dots,n\}$,
 $\smllst \colon 2^{[c]} \to \{0,\dots,n\} \cup \{\infty\}$,
 $\lrgst \colon 2^{[c]} \to \{0,\dots,n\} \cup \{-\infty\}$,
 $\asmllst \colon 2^{[c]} \times 2^{[c]} \to \{0,\dots,n\} \cup \{\infty\}$,
 $\alrgst \colon 2^{[c]} \times 2^{[c]} \to \{0,\dots,n\} \cup \{-\infty\}$, and
 $\mindiff \colon 2^{[c]} \times 2^{[c]} \to \{-n,\dots,n\} \cup \{\infty\}$.
 We set $\DP_{t}(\num, \bnum, \smllst, \lrgst, \asmllst, \alrgst, \mindiff) =  \true$
 if and only if there exists a set $S \subseteq V_{t}$ such that, for all $L \subseteq [c]$:\footnote{%
 We assume that $\min$ and $\max$ return $\infty$ and $-\infty$, respectively, when applied to the empty set.}
 \begin{itemize}
   \item $\num = \sum_{C \in \mathcal{C}(S,L)} |C|$, \ $\bnum = \sum_{D \in \mathcal{D}(S,L)} |D|$,
   \item $\smllst = \min_{C \in \mathcal{C}(S,L)} |C|$, \ $\lrgst  = \max_{D \in \mathcal{D}(S,L)} |D|$,
 \end{itemize} 
 and for all $L_{1}, L_{2} \subseteq [c]$:
 \begin{itemize}
   \item $\asmllst = \min \{|C| \mid C \in \mathcal{C}(S,L_{1}), D \in \mathcal{D}(S, L_{2}), C \text{ and } D \text{ are adjacent}\}$,
   \item $\alrgst = \max \{|D| \mid C \in \mathcal{C}(S,L_{1}), D \in \mathcal{D}(S, L_{2}), C \text{ and } D \text{ are adjacent}\}$,
   \item $\mindiff = \min \{|C| - |D| \mid C \in \mathcal{C}(S,L_{1}), D \in \mathcal{D}(S, L_{2}), C \text{ and } D \text{ are adjacent}\}$.
 \end{itemize}
 
 Let $r$ be the root node of $T$.
 Observe that $\safe(G)$ is the minimum integer $s$ such that
 there exist $\num$, $\bnum$, $\smllst$, $\lrgst$, $\asmllst$, $\alrgst$, and $\mindiff$ satisfying that
 $\DP_{r}(\num, \bnum, \smllst, \lrgst, \asmllst, \alrgst, \mindiff) =  \true$,
 $s = \sum_{L \subseteq [c]} \num$, and $\mindiff(L_{1}, L_{2}) \ge 0$ for all $L_{1}, L_{2} \subseteq [c]$.
 For $\csafe(G)$, we additionally ask that
 $\num(L)$ is nonzero with exactly one $L \subseteq [c]$.
 
 We can compute in a bottom-up manner all entries
 $\DP_{t}(\num, \bnum, \smllst, \lrgst, \asmllst, \alrgst, \mindiff)$.
 Note that there are $O(n^{10 \cdot 2^{c} + 1})$ such entries.
 Hence, to prove Theorem~\ref{thm:cliquewidth-xp}, it suffices to show that
 each entry can be computed in time $O(n^{18\cdot 2^{c}})$
 assuming that the entries for the children of $t$ are already computed. This is indeed the case for a $\cup$-node, while for $\rho$- and $\eta$-nodes $O(n^{10\cdot 2^c})$ will suffice.

 \begin{lemma}
 \label{lem:cw-leaf}
 For a leaf node $t$ with label $\circ_{i}$,
 $\DP_{t}(\num, \bnum, \smllst, \lrgst, \asmllst, \alrgst, \mindiff)$
 can be computed in $O(1)$ time.
 \end{lemma}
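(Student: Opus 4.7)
The plan is to argue that, for a leaf node $t$ labeled $\circ_{i}$, the graph $G_{t}$ is just a single $i$-vertex $v$, so there are only two candidate subsets to consider, namely $S=\emptyset$ and $S=\{v\}$. Hence the entire table at $t$ has at most two true entries, and to decide any single entry it suffices to compare the input tuple $(\num,\bnum,\smllst,\lrgst,\asmllst,\alrgst,\mindiff)$ against these two explicit tuples.

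More concretely, I would first write down the tuple corresponding to $S=\emptyset$: here $\mathcal{C}(S,L)=\emptyset$ for every $L$, while $\mathcal{D}(S,\{i\})$ is a single component of size $1$ and $\mathcal{D}(S,L)=\emptyset$ for every $L\ne\{i\}$. Translating this into the six index functions yields $\num\equiv 0$, $\bnum(\{i\})=1$ and $\bnum(L)=0$ for $L\ne\{i\}$, $\smllst\equiv\infty$, $\lrgst(\{i\})=1$ and $\lrgst(L)=-\infty$ otherwise, and finally $\asmllst\equiv\infty$, $\alrgst\equiv-\infty$, $\mindiff\equiv\infty$ (since no $C\in\mathcal{C}$ exists to form an adjacent pair). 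Then I would write down the symmetric tuple for $S=\{v\}$, where the roles of $\mathcal{C}$ and $\mathcal{D}$ are exchanged: $\num(\{i\})=1$, $\num(L)=0$ otherwise, $\bnum\equiv 0$, $\smllst(\{i\})=1$ and $\smllst(L)=\infty$ otherwise, $\lrgst\equiv-\infty$, and again $\asmllst\equiv\infty$, $\alrgst\equiv-\infty$, $\mindiff\equiv\infty$, because $\mathcal{D}(S,L)=\emptyset$ rules out any adjacent $(C,D)$ pair.

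Having both tuples in closed form, the algorithm for $\DP_{t}$ simply returns $\true$ if the query matches either of the two precomputed tuples and $\false$ otherwise. Since $c$ is treated as a fixed parameter absorbed into the $g(c)$ factor, comparing two tuples indexed by $2^{[c]}$ (or $2^{[c]}\times 2^{[c]}$) is performed in $O(1)$ time, giving the claimed bound. The only step that requires care, rather than being genuinely difficult, is to make sure that the ``empty-set'' conventions $\min\emptyset=\infty$ and $\max\emptyset=-\infty$ stated in the footnote are applied consistently to every one of the seven indexed quantities, so that the two reference tuples are recorded correctly; there is no real obstacle beyond this bookkeeping.
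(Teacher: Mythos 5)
Your proposal is correct and follows essentially the same route as the paper: it enumerates the two possible sets $S=\emptyset$ and $S=\{v\}$, writes out the corresponding tuples (which match the paper's two cases exactly, including the $\min\emptyset=\infty$ and $\max\emptyset=-\infty$ conventions for the adjacency-based quantities), and checks the queried entry against them in $O(1)$ time.
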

 \begin{proof}
 Observe that $\DP_{t}(\num, \bnum, \smllst, \lrgst, \asmllst, \alrgst, \mindiff) = \true$ if and only if:
 \begin{itemize}
   \item $\num(L) = \bnum(L) = 0$, $\smllst(L) = \infty$, and $\lrgst(L) = -\infty$ for all $L \ne \{i\}$,
 
   \item $\asmllst(L_{1}, L_{2}) = \infty$, $\alrgst(L_{1}, L_{2}) = -\infty$, and $\mindiff(L_{1}, L_{2}) = \infty$
   for all $L_{1}, L_{2} \subseteq [c]$,
 \end{itemize}
 and either:
 \begin{itemize}
   \setlength{\itemsep}{0pt}
   \item $\num(\{i\}) = 0$, $\bnum(\{i\}) = 1$, $\smllst(\{i\}) = \infty$, $\lrgst(\{i\}) = 1$,
   or
   \item $\num(\{i\}) = 1$, $\bnum(\{i\}) = 0$, $\smllst(\{i\}) = 1$, $\lrgst(\{i\}) = -\infty$,
 \end{itemize}
 where the first case corresponds to $S = \emptyset$ and the second one to $S = V_{t}^{i} = V_{t}$.
 These conditions can be checked in $O(1)$ time.
 \end{proof}
 
 \begin{lemma}
 For a $\cup$-node $t$, 
 $\DP_{t}(\num, \bnum, \smllst, \lrgst, \asmllst, \alrgst, \mindiff)$
 can be computed in $O(n^{18 \cdot 2^{c}})$ time.
 \end{lemma}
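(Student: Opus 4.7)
The plan is to exploit the structural simplicity of a $\cup$-node: it represents the disjoint union $G_t = G_{t_1} \cup G_{t_2}$ of the graphs of its two children, with no edges between $V_{t_1}$ and $V_{t_2}$. Hence any set $S \subseteq V_t$ splits uniquely as $S = S_1 \sqcup S_2$ with $S_i = S \cap V_{t_i}$, and every component of $G_t[S]$ (respectively $G_t - S$) lies entirely in $V_{t_1}$ or entirely in $V_{t_2}$. In particular, for each $L \subseteq [c]$ the family $\mathcal{C}(S,L)$ partitions as the disjoint union of the corresponding families for the two children, and likewise for $\mathcal{D}(S,L)$.

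Under this decomposition I would show that each of the seven bookkeeping fields combines trivially from the children's values. For every $L \subseteq [c]$ one has $\num(L) = \num_1(L) + \num_2(L)$ and $\bnum(L) = \bnum_1(L) + \bnum_2(L)$, together with $\smllst(L) = \min(\smllst_1(L), \smllst_2(L))$ and $\lrgst(L) = \max(\lrgst_1(L), \lrgst_2(L))$. For every pair $(L_1, L_2) \in 2^{[c]} \times 2^{[c]}$, the values $\asmllst$ and $\mindiff$ are the min of the two children's values and $\alrgst$ is the max. These last three rules use the absence of cross-edges in a crucial way: any adjacent pair $(C,D)$ with $C \in \mathcal{C}(S,L_1)$ and $D \in \mathcal{D}(S,L_2)$ lies entirely in one side, so the extrema over such pairs decompose along the bipartition.

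The algorithm itself is immediate from the combination rules: iterate over all pairs of true entries in $\DP_{t_1}$ and $\DP_{t_2}$, compute the combined seven-tuple using the formulae above, and set the corresponding entry of $\DP_t$ to $\true$. Correctness follows directly from the preceding observation, since every $S \subseteq V_t$ corresponds to a unique pair $(S_1, S_2)$ and vice versa. The main obstacle will be matching the running-time bound $O(n^{18 \cdot 2^c})$. A naive product of the two child tables, each of size $O(n^{10 \cdot 2^c + 1})$, would cost roughly the square of the table size, which overshoots the target. To close the gap I would enumerate more economically by fixing a target tuple in $\DP_t$ and, for each coordinate, guessing how the value is achieved: for the additive fields $\num$ and $\bnum$ one guesses child~1's share at each $L$ (child~2 is then forced), while for each $\min/\max$ field one guesses which child attains the extremum at each index and performs a single inequality lookup against the other child's recorded value. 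Careful accounting of these guesses together with table lookups into the children, which are already computed, will yield the claimed $O(n^{18 \cdot 2^c})$ bound.
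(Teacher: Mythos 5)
Your proposal matches the paper's proof: the same per-label combination rules (sums for $\num,\bnum$, min/max for the remaining five fields), justified by the fact that every component of $G_t[S]$ and of $G_t - S$ lies entirely on one side of the disjoint union, and the same enumeration over pairs of child tuples. The running-time accounting you defer is resolved in the paper more simply than you anticipate: determinacy is exploited only for the two additive fields (given the target, $\num_2$ and $\bnum_2$ are forced by $\num_1$ and $\bnum_1$), while the min/max fields are enumerated naively over both children, and the product of the per-field candidate counts, $n^{(1+1+2+2+4+4+4)\cdot 2^{c}}$, already gives the claimed $O(n^{18\cdot 2^{c}})$ bound, so your finer extremum-guessing scheme is sound but unnecessary.
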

 \begin{proof}
 Let $t_{1}$ and $t_{2}$ be the children of $t$ in $T$.
 Now, $\DP_{t}(\num, \bnum, \smllst, \lrgst, \asmllst, \alrgst, \mindiff) = \true$ if and only if there exist
 tuples $(\num_{1}, \bnum_{1}, \smllst_{1}, \lrgst_{1}, \asmllst_{1}, \alrgst_{1}, \mindiff_{1})$ and
 $(\num_{2}, \bnum_{2}, \smllst_{2}, \lrgst_{2}, \asmllst_{2}, \alrgst_{2}, \mindiff_{2})$
 such that:
 \begin{itemize}
   \item $\DP_{t_{1}}(\num_{1}, \bnum_{1}, \smllst_{1}, \lrgst_{1}, \asmllst_{1}, \alrgst_{1}, \mindiff_{1}) = \true$,
   \item $\DP_{t_{2}}(\num_{2}, \bnum_{2}, \smllst_{2}, \lrgst_{2}, \asmllst_{2}, \alrgst_{2}, \mindiff_{2}) = \true$,
   \item $\num(L) = \num_{1}(L) + \num_{2}(L)$ for all $L \subseteq [c]$,
   \item $\bnum(L) = \bnum_{1}(L) + \bnum_{2}(L)$ for all $L \subseteq [c]$,
   \item $\smllst(L) = \min\{\smllst_{1}(L), \smllst_{2}(L)\}$ for all $L \subseteq [c]$,
  \item $\lrgst(L) = \max\{\lrgst_{1}(L), \lrgst_{2}(L)\}$ for all $L \subseteq [c]$,
  \item $\asmllst(L_{1},L_{2}) = \min\{\asmllst_{1}(L_{1},L_{2}), \asmllst_{2}(L_{1},L_{2})\}$ for all $L_{1}, L_{2} \subseteq [c]$,
   \item $\alrgst(L_{1},L_{2}) = \max\{\alrgst_{1}(L_{1},L_{2}), \alrgst_{2}(L_{1},L_{2})\}$ for all $L_{1},L_{2} \subseteq [c]$, and
   \item $\mindiff(L_{1},L_{2}) = \min\{\mindiff_{1}(L_{1},L_{2}), \mindiff_{2}(L_{1},L_{2})\}$ for all $L_{1}, L_{2} \subseteq [c]$.
 \end{itemize}
 
 There are at most $n^{2^{c}}$ possible pairs for $(\num_{1}, \num_{2})$ as $\num_{2}$ is uniquely determined by $\num_{1}$.
 Similarly, there are at most $n^{2^{c}}$ possible pairs for $(\bnum_{1}, \bnum_{2})$.
 There are at most $n^{2 \cdot 2^{c}}$ candidates each for $(\smllst_{1}, \smllst_{2})$ and $(\lrgst_{1}, \lrgst_{2})$, and
 at most $n^{4 \cdot 2^{c}}$ candidates each for $(\asmllst_{1}, \asmllst_{2})$, $(\alrgst_{1}, \alrgst_{2})$,
 and $(\mindiff_{1}, \mindiff_{2})$.
 In total, there are at most $n^{18 \cdot 2^{c}}$ candidates for the tuples
 $(\num_{1}, \bnum_{1}, \smllst_{1}, \lrgst_{1}, \asmllst_{1}, \alrgst_{1}, \mindiff_{1})$ and
 $(\num_{2}, \bnum_{2}, \smllst_{2}, \lrgst_{2}, \asmllst_{2}, \alrgst_{2}, \mindiff_{2})$.
 Each candidate can be checked in $O(1)$ time, and thus the lemma holds.
\end{proof}
 
 
 \begin{lemma}
 For a $\rho_{i,j}$-node $t$, 
 $\DP_{t}(\num, \bnum, \smllst, \lrgst, \asmllst, \alrgst, \mindiff)$
 can be computed in $O(n^{10 \cdot 2^{c}})$ time.
 \end{lemma}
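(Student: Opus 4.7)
The plan is to exploit the fact that a $\rho_{i,j}$-node does not alter the underlying graph: $G_t$ coincides with $G_{t_1}$ as an unlabeled graph, the only difference being that every vertex that carried label $i$ in $G_{t_1}$ carries label $j$ in $G_t$. Consequently, for every $S \subseteq V_t = V_{t_1}$, the component decompositions of $G_t[S]$ and $G_t - S$ are literally identical to those of $G_{t_1}[S]$ and $G_{t_1} - S$; only the label set attached to each component changes. Concretely, a component whose label set was $L \subseteq [c]$ in $G_{t_1}$ acquires label set $\phi(L)$ in $G_t$, where $\phi(L) = L$ if $i \notin L$ and $\phi(L) = (L \setminus \{i\}) \cup \{j\}$ if $i \in L$. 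This translates into a simple pushforward relation between $\DP_t$ and $\DP_{t_1}$.

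The first step is to make this precise: an entry $\DP_t(\num, \bnum, \smllst, \lrgst, \asmllst, \alrgst, \mindiff)$ equals $\true$ iff there is a $\true$ entry $(\num^*, \bnum^*, \smllst^*, \lrgst^*, \asmllst^*, \alrgst^*, \mindiff^*)$ of $\DP_{t_1}$ whose $\phi$-pushforward agrees with the target. The rule is the obvious one: for every $L' \subseteq [c]$, the new $\num(L')$ equals $\sum_{L \in \phi^{-1}(L')} \num^*(L)$, the new $\smllst(L')$ equals the corresponding minimum, and analogously for $\bnum$ and $\lrgst$; the pair-indexed quantities aggregate over $(\phi \times \phi)^{-1}(L_1', L_2')$ in the same way.

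The second step is a cheap consistency check followed by an enumeration. Any target that is nonzero (or not at its default $\pm\infty$) on some $L$ or pair that contains $i$ must be rejected, since after the relabel no component of $G_t$ can carry label $i$. Otherwise, I note that each $L' \subseteq [c]$ has at most three preimages under $\phi$ (namely $L'$, $L' \cup \{i\}$, and $(L' \setminus \{j\}) \cup \{i\}$, the latter two arising only when $j \in L'$ and $i \notin L'$), and each pair has at most nine preimages under $\phi \times \phi$. Guessing, for each $L'$ (resp.\ pair), the way in which the target value is split among its preimages---how a sum decomposes into three summands for $\num$ and $\bnum$, which preimage realizes the $\min$ or $\max$ for $\smllst$ and $\lrgst$ together with the remaining values, and analogously for $\asmllst$, $\alrgst$, $\mindiff$---produces a polynomial-in-$n$ number of choices per index. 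Each candidate $\tau^*$ so obtained is then verified against $\DP_{t_1}$ in constant time, and correctness of the recurrence is immediate from the fact that $G_t = G_{t_1}$ as graphs.

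The main obstacle is the combinatorial bookkeeping that makes the exponent come out to $10 \cdot 2^c$ rather than something larger. This relies on the observation that the majority of indices have a unique preimage under $\phi$ (or $\phi \times \phi$) and hence contribute no choice at all, so that the ``splitting'' guesses only need to be made for the $L'$ with $j \in L'$ and $i \notin L'$ (and the corresponding pairs). Packaging these guesses uniformly across the seven functions so that they line up with the same per-function exponent used in the table-size count is where the calculation has to be done carefully; once this is in place, the correctness and running time of the recurrence follow directly.
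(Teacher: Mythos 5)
Your proposal is correct and follows essentially the same route as the paper: the relabelling acts as a pushforward $\phi$ on label sets, each target label set $L'$ with $i\notin L'$ and $j\in L'$ has exactly the three preimages $L'$, $L'\cup\{i\}$ and $(L'\cup\{i\})\setminus\{j\}$ (empty preimage if $i\in L'$, unique preimage otherwise), and the seven table functions aggregate by sum, min, or max over these preimages (and over products of preimages for the pair-indexed functions). The per-index ``splitting'' bookkeeping that you flag as the main remaining obstacle is unnecessary: the paper simply enumerates all $O(n^{10\cdot 2^{c}})$ candidate tuples for the child node and checks the aggregation conditions for each in time depending only on $c$, which immediately yields the stated bound.
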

 \begin{proof}
 Let $t_{1}$ be the child of $t$ in $T$.
 Observe that a component with label set $L$ in $G_{t}$
 has label set either $L$, $L^{+i}$, or $L^{+i-j}$ in $G_{t_{1}}$,
 where $L^{+i} = L \cup \{i\}$ and $L^{+i-j} = L \cup \{i\} \setminus \{j\}$.
 Thus $\DP_{t}(\num, \bnum, \smllst, \lrgst, \asmllst, \alrgst, \mindiff) = \true$
 if and only if there exists
 a tuple $(\num_{1}, \bnum_{1}, \smllst_{1}, \lrgst_{1}, \asmllst_{1}, \alrgst_{1}, \mindiff_{1})$ such that
 $\DP_{t_{1}}(\num_{1}, \bnum_{1}, \smllst_{1}, \lrgst_{1}, \asmllst_{1}, \alrgst_{1}, \mindiff_{1}) = \true$,
 where for all $L \subseteq [c]$:
 \begin{itemize}
   \item if $i \in L$, then
   $\num(L) = \bnum(L) = 0$, $\smllst(L) = \infty$, and $\lrgst(L) = -\infty$;
 
   \item if $i, j \notin L$, then
   $\num(L) = \num_{1}(L)$, $\bnum(L) = \bnum_{1}(L)$, $\smllst(L) = \smllst_{1}(L)$, and $\lrgst(L) = \lrgst_{1}(L)$;
 
   \item if $i \notin L$ and $j \in L$, then:
   \begin{itemize}
     \item $\num(L) = \num_{1}(L) + \num_{1}(L^{+i}) + \num_{1}(L^{+i-j})$,
     \item $\bnum(L) = \bnum_{1}(L) + \bnum_{1}(L^{+i}) + \bnum_{1}(L^{+i-j})$,
     \item $\smllst(L) = \min\{\smllst_{1}(L), \smllst_{1}(L^{+i}), \smllst_{1}(L^{+i-j})\}$, and
     \item $\lrgst(L) = \max\{\lrgst_{1}(L), \lrgst_{1}(L^{+i}), \lrgst_{1}(L^{+i-j})\}$;
   \end{itemize}
 \end{itemize}
 and for all $L_{1}, L_{2} \subseteq [c]$:
 \begin{itemize}
   \item if $i \in L_{1} \cup L_{2}$, then $\asmllst(L_{1}, L_{2}) = \infty$, $\alrgst(L_{1},L_{2}) = -\infty$, and $\mindiff(L_{1},L_{2}) = \infty$;
 
   \item if $i,j \notin L_{1} \cup L_{2}$, then
   $\asmllst(L_{1}, L_{2}) = \asmllst_{1}(L_{1}, L_{2})$, 
   $\alrgst(L_{1},L_{2}) = \alrgst_{1}(L_{1},L_{2})$, and
   $\mindiff(L_{1},L_{2}) = \mindiff_{1}(L_{1},L_{2})$;
 
   \item if $i \notin L_{1} \cup L_{2}$ and $j \in L_{1} \cup L_{2}$, then
   \begin{align*}
     \asmllst(L_{1}, L_{2})
     &= \min\{\asmllst(L_{1}', L_{2}') \mid L_{1}' \in \mathcal{L}_{1}, L_{2}' \in \mathcal{L}_{2}\},
     \\
     \alrgst(L_{1}, L_{2})
     &= \max\{\alrgst(L_{1}', L_{2}')  \mid L_{1}' \in \mathcal{L}_{1}, L_{2}' \in \mathcal{L}_{2}\},
     \\
     \mindiff(L_{1}, L_{2})
     &= \min\{\mindiff(L_{1}', L_{2}') \mid L_{1}' \in \mathcal{L}_{1}, L_{2}' \in \mathcal{L}_{2}\},
   \end{align*}
   where, for $h \in \{1,2\}$,
   $\mathcal{L}_{h} = \{L_{h}\}$ if $j \notin L_{h}$, and
   $\mathcal{L}_{h} = \{L_{h}, L_{h}^{+i}, L_{h}^{+i-j}\}$ if $j \in L_{h}$.
 \end{itemize}
 The claimed running time follows from the fact that there are $O(n^{10 \cdot 2^{c}})$ candidates for 
 $(\num_{1}, \bnum_{1}, \smllst_{1}, \lrgst_{1}, \asmllst_{1}, \alrgst_{1}, \mindiff_{1})$
 and that each candidate can be checked in $O(1)$ time.
\end{proof}
 
 
 \begin{lemma}
 For an $\eta_{i,j}$-node $t$,
 $\DP_{t}(\num, \bnum, \smllst, \lrgst, \asmllst, \alrgst, \mindiff)$
 can be computed in $O(n^{10 \cdot 2^{c}})$ time.
 \end{lemma}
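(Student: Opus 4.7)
The plan is to process an $\eta_{i,j}$-node $t$ with child $t_{1}$ by iterating over all $O(n^{10 \cdot 2^{c}})$ candidate tuples $(\num_{1}, \bnum_{1}, \smllst_{1}, \lrgst_{1}, \asmllst_{1}, \alrgst_{1}, \mindiff_{1})$ for $\DP_{t_{1}}$: each child tuple marked $\true$ uniquely determines a parent tuple, which we then mark $\true$ in $\DP_{t}$. Because each such derivation can be carried out in $O(1)$ arithmetic operations (any $\mathrm{poly}(c)$ factor is absorbed into the constant), this gives the claimed $O(n^{10 \cdot 2^{c}})$ bound, and the substance of the proof is to describe this derivation correctly.

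The operation $\eta_{i,j}$, which adds all edges between $i$- and $j$-vertices, has two effects on $G_{t}$. The first is internal merging within $S$ (and, symmetrically, within $V_{t} \setminus S$). If $\sum_{L \ni i} \num_{1}(L) > 0$ and $\sum_{L \ni j} \num_{1}(L) > 0$, then every component of $G_{t_{1}}[S]$ whose label set intersects $\{i,j\}$ coalesces with the others into a single component of $G_{t}[S]$ with label $L_{S}^{\star} := \bigcup \{L : L \cap \{i,j\} \neq \emptyset,\ \num_{1}(L) > 0\}$ and size $\num(L_{S}^{\star}) = \sum_{L : L \cap \{i,j\} \neq \emptyset} \num_{1}(L)$. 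Label sets $L$ disjoint from $\{i,j\}$ are unchanged, so $\num(L) = \num_{1}(L)$ and $\smllst(L) = \smllst_{1}(L)$; labels that intersect $\{i,j\}$ but differ from $L_{S}^{\star}$ give $\num(L) = 0$ and $\smllst(L) = \infty$; and $\smllst(L_{S}^{\star}) = \num(L_{S}^{\star})$, since exactly one component of this label is produced. The analogous rules using $\bnum_{1}$ determine $L_{V \setminus S}^{\star}$ together with the updated $\bnum$ and $\lrgst$. When the merging condition fails on a side, the $\num/\smllst$ (or $\bnum/\lrgst$) values on that side are simply inherited from the child.

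The second effect is the creation of new cross adjacencies: every $i$-vertex of $S$ becomes adjacent to every $j$-vertex of $V_{t} \setminus S$, and symmetrically with $i$ and $j$ swapped. The parent values $\asmllst(L_{1}, L_{2}), \alrgst(L_{1}, L_{2}), \mindiff(L_{1}, L_{2})$ are derived by case analysis. When both $L_{1}$ and $L_{2}$ are disjoint from $\{i,j\}$, they are inherited directly from $\asmllst_{1}, \alrgst_{1}, \mindiff_{1}$. When $L_{1} = L_{S}^{\star}$, the old (non-$\eta$) adjacencies are aggregated by taking $\min$ or $\max$ of $\asmllst_{1}(L', M), \alrgst_{1}(L', M), \mindiff_{1}(L', M)$ over all child labels $L'$ with $\num_{1}(L') > 0$ and $L' \cap \{i,j\} \neq \emptyset$ (and analogously over $M$ when $L_{2} = L_{V \setminus S}^{\star}$); the new $\eta$-adjacencies are then added whenever $(i \in L_{1}$ and $j \in L_{2})$ or $(j \in L_{1}$ and $i \in L_{2})$, contributing the fixed component sizes $\num(L_{1})$ on the $S$-side and $\bnum(L_{2})$ on the other. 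Any $L$ that intersects $\{i,j\}$ but differs from the appropriate merged label corresponds to no component in $G_{t}$, and its $\asmllst$ and $\mindiff$ entries are set to $\infty$ while $\alrgst$ is set to $-\infty$.

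The principal obstacle will be the bookkeeping in this adjacency step: correctly combining the inherited adjacencies (aggregated across the constituents of a merged component) with the freshly created $\eta$-adjacencies, and cleanly handling the mixed subcases in which merging occurs on one side but not the other. Once this case analysis is spelled out, correctness follows directly from the definitions of the DP table, and the running-time bound is immediate from the number of candidate child tuples together with the $O(1)$ cost per update.
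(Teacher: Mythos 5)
Your overall strategy coincides with the paper's: enumerate the $O(n^{10\cdot 2^{c}})$ child tuples, observe that if $S$ meets both $V_t^i$ and $V_t^j$ then all $S$-components whose label set meets $\{i,j\}$ merge into one component labelled $\bigcup\mathcal{L}_{i,j}$ (and symmetrically for $V_t\setminus S$), and update the table in $O(1)$ per tuple. Your treatment of $\num$, $\bnum$, $\smllst$, $\lrgst$ matches the paper's. However, two of the concrete update rules you give for the adjacency entries are wrong, and they are precisely the points where the paper's case analysis has to be careful. First, when $L_1=L_S^{\star}$ is a merged label and $L_2$ is disjoint from $\{i,j\}$, you propose to aggregate $\asmllst_1(L',L_2)$ and $\mindiff_1(L',L_2)$ by taking the minimum over the constituent labels $L'$. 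That fails because the component on the $S$-side has changed size: after the merge the unique component with label $L_1$ has size $\num(L_1)$, so the correct rule is $\asmllst(L_1,L_2)=\num(L_1)$ if \emph{some} constituent had an adjacent $D$ with label $L_2$ (and $\infty$ otherwise), and $\mindiff(L_1,L_2)=\asmllst(L_1,L_2)-\alrgst(L_1,L_2)$; a plain $\min$ of the child's $\asmllst_1$ would report the size of the smallest pre-merge constituent, which is simply not the size of any component of $G_t[S]$.

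Second, for the freshly created $\eta$-adjacencies you state that they contribute ``the fixed component sizes $\num(L_1)$ on the $S$-side and $\bnum(L_2)$ on the other.'' This is only valid when the label in question corresponds to a \emph{single} component, i.e.\ when merging actually occurred on that side. In the unmerged situation (say $S$ contains $i$-vertices but no $j$-vertices, while $V_t\setminus S$ contains $j$-vertices but no $i$-vertices, so $|\{i,j\}\cap L_1|=|\{i,j\}\cap L_2|=1$), there may be several components carrying label $L_1$ and several carrying $L_2$, every such pair becomes adjacent, and the correct contributions are the extremal component sizes $\smllst(L_1)$ and $\lrgst(L_2)$ --- not the totals $\num(L_1)$ and $\bnum(L_2)$. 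The same correction is needed in the mixed case where $L_1$ is merged but $L_2$ is not: there the $D$-side contribution must be $\lrgst(L_2)$. Using the totals would overestimate the sizes of the components in $G_t-S$ adjacent to a given $S$-component (and underestimate $\mindiff$), so the table entries computed at the root would no longer certify safeness. The running-time accounting is fine; it is this bookkeeping, which you yourself flag as the principal obstacle, that your formulas do not yet get right.
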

 \begin{proof}
 Let $t_{1}$ be the child of $t$ in $T$.
 If there is a set $S \subseteq V(G_{t})$ corresponding to the tuple $\tau = (\num, \bnum, \smllst, \lrgst, \asmllst, \alrgst, \mindiff)$,
 then $V_{t}^{h} \cap S$ is non-empty if and only if $\num(\tau,h) := \sum_{C \in \mathcal{C}(S,L), \ h \in L \subseteq [c]} |C| > 0$.
 Also, $V_{t}^{h} - S$ is non-empty if and only if $\bnum(\tau,h) := \sum_{D \in \mathcal{D}(S,L), \ h \in L \subseteq [c]} |D| > 0$.
 Given a tuple, these conditions can be checked in linear time. Hence we assume that we know which cases apply to the tuple.

 Let $\mathcal{L}_{i,j} = \{L \subseteq [c] \mid L \cap \{i,j\} \ne \emptyset, \num(L) > 0\}$
 and $\overline{\mathcal{L}}_{i,j} = \{L \subseteq [c] \mid L \cap \{i,j\} \ne \emptyset, \bnum(L) > 0\}$.
 By slightly abusing the notation, we denote 
 by $\bigcup\mathcal{L}_{i,j}$ the union $\bigcup_{L \in \mathcal{L}_{i,j}} L$, and 
 by $\bigcup\overline{\mathcal{L}}_{i,j}$ the union $\bigcup_{L \in \overline{\mathcal{L}}_{i,j}} L$.
 If $V_{t}^{i} \cap S \ne \emptyset$ and $V_{t}^{j} \cap S \ne \emptyset$,
 then all components in $G_{t_{1}}[S]$ containing an $i$-vertex or a $j$-vertex
 will be merged into one component with the color set $\bigcup\mathcal{L}_{i,j}$ in $G_{t}[S]$.
 Otherwise, at most one color class $i$ or $j$ contains vertices of $S$.
 Hence no components of $G_{t_{1}}[S]$ will be merged in $G_{t}[S]$,
 while the edges added between $V_{t}^{i}$ and $V_{t}^{j}$ make each component in $G_{t}[S]$ with $i$-vertices ($j$-vertices),
 if any, adjacent to each component in $G_{t}-S$ with $j$-vertices ($i$-vertices, respectively).
 The analogous observations hold also for the components in $G_{t_{1}} - S$ and $G_{t} - S$.
 
 The following claims are almost direct consequences of the discussion above.
 \begin{myclaim}
   \label{clm:nonemptyLcomp}
   If $\num(L) > 0$, then one of the following cases hold:
   \begin{itemize}
     \item $i,j \notin L$;
     \item $|\{i,j\} \cap L| = 1$ and  $\num(\tau, i) \cdot \num(\tau, j) = 0$; or
     \item $i, j \in L = \bigcup\mathcal{L}_{i,j}$ and  $\num(\tau, i) \cdot \num(\tau, j) > 0$.
   \end{itemize}
 \end{myclaim}
 \begin{proof}
 If none of the cases above holds, then we have either
 \begin{enumerate}
   \item $\{i,j\} \cap L \ne \emptyset$, $L \ne \bigcup\mathcal{L}_{i,j}$, and $\num(\tau, i) \cdot \num(\tau, j) > 0$; or
 
   \item $i, j \in L$ and $\num(\tau, i) \cdot \num(\tau, j) = 0$.
 \end{enumerate}
 In the first case, $\num(\tau, i) \cdot \num(\tau, j) > 0$ implies that 
 there is a unique component with color class $L$ such that $L \cap \{i,j\} \ne \emptyset$.
 Furthermore, we know that $L = \bigcup\mathcal{L}_{i,j}$.
 In the second case, $\num(\tau, i) = 0$ or $\num(\tau, j) = 0$ holds,
 and thus no component can contain both $i$-vertices and $j$-vertices.
\end{proof}
 We can prove the next claim in the same way.
 \begin{myclaim}
   \label{clm:nonemptyLcomp-bar}
   If $\bnum(L) > 0$, then one of the following cases hold:
   \begin{itemize}
     \item $i,j \notin L$;
     \item $|\{i,j\} \cap L| = 1$ and $\bnum(\tau, i) \cdot \bnum(\tau, j) = 0$; or
     \item $i,j \in L = \bigcup\overline{\mathcal{L}}_{i,j}$ and $\bnum(\tau, i) \cdot \bnum(\tau, j) > 0$.
   \end{itemize}
 \end{myclaim}
 
 In the following, we only explain the case where each color set satisfies the condition in
 Claim~\ref{clm:nonemptyLcomp} or Claim~\ref{clm:nonemptyLcomp-bar} (depending on which function we are talking about),
 since otherwise the values of functions can be trivially determined.

 Now we can see that 
 $\DP_{t}(\num, \bnum, \smllst, \lrgst, \asmllst, \alrgst, \mindiff) = \true$
 if and only if there exists
 a tuple $(\num_{1}, \bnum_{1}, \smllst_{1}, \lrgst_{1}, \asmllst_{1}, \alrgst_{1}, \mindiff_{1})$ such that
 $\DP_{t_{1}}(\num_{1}, \bnum_{1}, \smllst_{1}, \lrgst_{1}, \asmllst_{1}, \alrgst_{1}, \mindiff_{1}) = \true$,
 where for all $L \subseteq [c]$, the following holds:
 \begin{itemize}
   \item if $i,j \notin L$, then $\num(L) = \num_{1}(L)$, $\bnum(L) = \bnum_{1}(L)$, 
   $\smllst(L) = \smllst_{1}(L)$, and $\lrgst(L) = \lrgst_{1}(L)$;
 
   \item if $|\{i,j\} \cap L| = 1$ and $\num(\tau, i) \cdot \num(\tau, j) = 0$, then $\num(L) = \num_{1}(L)$
   and $\smllst(L) = \smllst_{1}(L)$;
 
   \item if $|\{i,j\} \cap L| = 1$ and $\bnum(\tau, i) \cdot \bnum(\tau, j) = 0$, then $\bnum(L) = \bnum_{1}(L)$
   and $\lrgst(L) = \lrgst_{1}(L)$;
 
   \item if $i,j \in L = \bigcup\mathcal{L}_{i,j}$ and $\num(\tau, i) \cdot \num(\tau, j) > 0$, then 
   $\num(L) = \sum_{Q \in \mathcal{L}_{i,j}} \num_{1}(Q)$ and $\smllst(L) = \num(L)$;
   
   \item if $i,j \in L = \bigcup\overline{\mathcal{L}}_{i,j}$ and $\bnum(\tau, i) \cdot \bnum(\tau, j) > 0$, then
   $\bnum(L) = \sum_{Q \in \overline{\mathcal{L}}_{i,j}} \bnum_{1}(Q)$ and $\lrgst(L) = \bnum(L)$;
 \end{itemize}
 and for all $L_{1}, L_{2} \subseteq [c]$, the following holds:
 \begin{itemize}
 
   \item \textbf{(no related merge / no new edge)}
   if either 
   \begin{itemize}
     \item $i,j \notin L_{1} \cup L_{2}$
     
     \item $i,j \notin L_{1}$, $|\{i,j\} \cap L_{2}| = 1$, and $\bnum(\tau, i) \cdot \bnum(\tau, j) = 0$, or
 
     \item $i,j \notin L_{2}$, $|\{i,j\} \cap L_{1}| = 1$, and $\num(\tau, i) \cdot \num(\tau, j) = 0$,
   \end{itemize}
   then $\asmllst(L_{1}, L_{2}) = \asmllst_{1}(L_{1}, L_{2})$,
   $\alrgst(L_{1}, L_{2}) = \alrgst_{1}(L_{1}, L_{2})$, and $\mindiff(L_{1}, L_{2}) = \mindiff_{1}(L_{1}, L_{2})$;
 
   \smallskip
 
   \item \textbf{(related but unmerged color sets)}
   if $|\{i,j\} \cap L_{1}| = |\{i,j\} \cap L_{2}| = 1$, $\num(\tau, i) \cdot \num(\tau, j) = 0$,
   $\bnum(\tau, i) \cdot \bnum(\tau, j) = 0$, then
   \begin{itemize}
     \item if $\{i,j\} \cap L_{1} = \{i,j\} \cap L_{2}$, then
     $\asmllst(L_{1}, L_{2}) = \asmllst_{1}(L_{1}, L_{2})$,
     $\alrgst(L_{1}, L_{2}) = \alrgst_{1}(L_{1}, L_{2})$, and $\mindiff(L_{1}, L_{2}) = \mindiff_{1}(L_{1}, L_{2})$;
 
     \item if $\{i,j\} \cap L_{1} \ne \{i,j\} \cap L_{2}$, then
     $\asmllst(L_{1}, L_{2}) = \smllst(L_{1})$,
     $\alrgst(L_{1}, L_{2}) = \lrgst(L_{2})$, and 
     $\mindiff(L_{1}, L_{2}) = \smllst(L_{1}) - \lrgst(L_{2})$;
   \end{itemize}
 
   \smallskip
 
   \item \textbf{(merged color set $L_{1}$)}
   if $i,j \in L_{1} = \bigcup\mathcal{L}_{i,j}$ and $\num(\tau, i) \cdot \num(\tau, j) > 0$, then
   \begin{itemize}
     \item if $i,j \notin L_{2}$, then
     \begin{align*}
       \asmllst(L_{1}, L_{2}) &=
       \begin{cases}
 	\infty & \text{if } \min_{L_{1}' \in \mathcal{L}_{i,j}} \asmllst_{1}(L_{1}', L_{2}) = \infty, \\
 	\num(L_{1}) & \text{otherwise},
       \end{cases}
     \end{align*}
     $\alrgst(L_{1}, L_{2})  = \max_{L_{1}' \in \mathcal{L}_{i,j}} \alrgst_{1}(L_{1}', L_{2})$, and
     $\mindiff(L_{1}, L_{2}) = \asmllst(L_{1}, L_{2}) - \alrgst(L_{1}, L_{2})$;
     
     \item if either
     \begin{itemize}
       \item $|\{i,j\} \cap L_{2}| = 1$, $\bnum(\tau, i) \cdot \bnum(\tau, j) = 0$, and $\bnum(L_{2}) \ne 0$, or
       \item $i,j \in L_{2} = \bigcup\overline{\mathcal{L}}_{i,j}$ and $\bnum(\tau, i) \cdot \bnum(\tau, j) > 0$,
     \end{itemize}
     then
     $\asmllst(L_{1}, L_{2}) = \num(L_{1})$,
     $\alrgst(L_{1}, L_{2})  = \lrgst(L_{2})$, and
     $\mindiff(L_{1}, L_{2}) = \asmllst(L_{1}, L_{2}) - \alrgst(L_{1}, L_{2})$;
   \end{itemize}
 
   \smallskip
 
   \item \textbf{(merged color set $L_{2}$)}
   if $i,j \in L_{2} = \bigcup\overline{\mathcal{L}}_{i,j}$ and $\bnum(\tau, i) \cdot \bnum(\tau, j) > 0$, then
   \begin{itemize}
     \item if $i,j \notin L_{1}$, then
     $\asmllst(L_{1}, L_{2})  = \min_{L_{2}' \in \overline{\mathcal{L}}_{i,j}} \asmllst_{1}(L_{1}, L_{2}')$,
     \begin{align*}
       \alrgst(L_{1}, L_{2}) &=
       \begin{cases}
 	-\infty & \text{if } \max_{L_{2}' \in \overline{\mathcal{L}}_{i,j}} \alrgst_{1}(L_{1}, L_{2}') = -\infty, \\
 	\bnum(L_{2}) & \text{otherwise},
       \end{cases}
     \end{align*}
     $\mindiff(L_{1}, L_{2}) = \asmllst(L_{1}, L_{2}) - \alrgst(L_{1}, L_{2})$;
     
     \item if $|\{i,j\} \cap L_{1}| = 1$, $\num(\tau, i) \cdot \num(\tau, j) = 0$, and $\num(L_{1}) \ne 0$,
     then
     $\asmllst(L_{1}, L_{2}) = \smllst(L_{1})$,
     $\alrgst(L_{1}, L_{2})  = \bnum(L_{2})$, and
     $\mindiff(L_{1}, L_{2}) = \asmllst(L_{1}, L_{2}) - \alrgst(L_{1}, L_{2})$;
   \end{itemize}
 \end{itemize}
 
 The number of candidate tuples $(\num_{1}, \bnum_{1}, \smllst_{1}, \lrgst_{1}, \asmllst_{1}, \alrgst_{1}, \mindiff_{1})$
 is $O(n^{10 \cdot 2^{c}})$ and each of them can be checked in $O(1)$ time.
\end{proof}


\section{Faster algorithms parameterized by solution size}\label{sec_fpt_k}

We know that both SS and CSS admit FPT algorithms~\cite{AguedaCFLMMMNOS_structural} when parameterized by the solution size.
The algorithms in \cite{AguedaCFLMMMNOS_structural} use Courcelle's theorem~\cite{Courcelle92}, however, 
and thus their dependency on the parameter may be gigantic. The natural question would be whether they admit $O^{*}(k^{k})$-time algorithms as is the case for vertex integrity~\cite{DrangeDH16}.

We answer this question with the following theorems.
The first step of our algorithm for SS is a branching procedure to first guess the correct number of components ($k$ choices) and then guess their sizes (at most $k^k$ choices). We complete our solutions (ensuring they are connected) by constructing and solving appropriate \textsc{Steiner Tree} sub-instances. With a simple modification our algorithm also works for CSS.

\begin{theorem}\label{thm_fpt_k}
  \textsc{Safe Set} can be solved in $O^{*}(2^{k}k^{3k})$ time, where $k$ is the solution size.
\end{theorem}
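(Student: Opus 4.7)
The plan follows the two-step structure the authors sketch before the theorem: first branch over the combinatorial ``shape'' of the safe set, then solve each resulting structured subproblem with a Steiner-tree subroutine. I would begin by guessing the number $t \le k$ of connected components of the sought safe set $S$, and then an ordered tuple $(s_1,\ldots,s_t)$ of sizes with $\sum_i s_i \le k$; since each $s_i$ lies in $\{1,\ldots,k\}$, the number of such tuples is at most $k^{k}$, with an additional factor of $k$ handling the choice of $t$. After this outer branching, it remains, for every admissible signature, to find pairwise vertex-disjoint, pairwise non-adjacent connected sets $C_1,\ldots,C_t$ with $|C_i|=s_i$ whose union $S$ is safe in $G$.

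For the second step, I would reduce the construction of the $C_i$ to Steiner-tree-type subproblems. A connected vertex set of size exactly $s_i$ can be realized as the vertex set of a Steiner tree whose terminals are the vertices of $C_i$ themselves; since those terminals are not known a priori, we branch on their identities (or on a suitable set of seeds). Combining this with a standard $O^*(2^{s_i})$-time subset-DP for Steiner Tree in the style of Dreyfus--Wagner, and merging the DP tables across the $t$ components, gives a $2^k$ factor overall. The terminal-selection/distribution overhead -- that is, the choices of which of the $k$ vertices of $S$ go into which component and the local branching inside each Steiner-tree computation -- contributes an additional $k^{2k}$ factor. Multiplying $k^{k}\cdot k^{2k}\cdot 2^{k}$ yields the claimed $O^*(2^{k} k^{3k})$ bound. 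Once a candidate $S$ has been produced, verifying that $S$ is actually safe -- i.e.\ that every component of $G-S$ adjacent to $C_i$ has size at most $s_i$ -- is a straightforward polynomial-time check performed by computing the components of $G-S$.

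The main obstacle is that the safety condition couples the components: whether a particular $C_i$ is safe depends on how all of $S$ cuts the rest of $G$, so the $C_i$ are not independent subproblems that can be solved in isolation. My workaround is to let the Steiner-tree subroutine enforce only the local structural constraints (each $C_i$ connected and of the prescribed size, and the $C_i$ forming the components of $G[S]$), and to delegate the global safety requirement to the final polynomial-time verification; correctness of the algorithm then follows from the exhaustiveness of the outer branching over signatures. For \textsc{Connected Safe Set} the only modification is to skip the enumeration of $t$ and force $t=1$, which strictly reduces the running time and automatically produces a connected solution, yielding the same $O^*(2^{k} k^{3k})$ bound.
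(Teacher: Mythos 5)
Your outer branching over the number of components and their sizes matches the paper, and the use of a $O^{*}(2^{t})$ Steiner Tree subroutine to enforce connectivity of each part also appears in the paper's proof. However, there is a genuine gap at the heart of your argument: you never explain why the search for the actual \emph{vertices} of the safe set can be confined to $k^{O(k)}$ branches. You write that the terminals of each Steiner instance ``are not known a priori, so we branch on their identities (or on a suitable set of seeds),'' and then simply assert that this overhead is $k^{2k}$. Taken literally, branching on the identities of up to $k$ vertices in an $n$-vertex graph gives $n^{O(k)}$ choices, which is only an XP algorithm; nothing in your proposal bounds the candidate pool by a function of $k$. This is precisely the difficulty the paper's proof is designed to overcome: it defines a \emph{problematic} vertex $u$ (one lying in a component of $G-S$ of size exceeding $k$, or exceeding $k_i$ while adjacent to the partial set $S_i$), extracts from it a connected set $C$ of size $k+1$ (resp.\ $k_i+1$) in $G-S$, and proves that any correct solution $O$ must intersect $C$. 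That hitting-set argument is what yields only $|C|\cdot\ell\le (k+1)k$ branches per step and depth $k$, hence $O(k^{2k})$ leaves. Your proposal contains no analogue of this lemma, so the claimed $k^{2k}$ factor is unsupported.

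Relatedly, your correctness argument --- ``enforce only the local structural constraints \ldots\ and delegate the global safety requirement to the final polynomial-time verification; correctness then follows from the exhaustiveness of the outer branching over signatures'' --- does not work. Exhaustiveness over signatures does not imply that some branch actually produces a safe set when one exists: for a fixed signature there are exponentially many (in $n$) vertex sets realizing it, and you must argue that your bounded branching reaches at least one safe one. In the paper this is exactly what the problematic-vertex analysis provides in both directions: if the final set were unsafe, a problematic vertex would still exist and the branching could not have terminated; and if a solution $O$ with the guessed sizes exists, every branching step has at least one child that keeps $S_i\subseteq O_i$, so the Steiner Tree phase succeeds. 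You would need to supply this missing structural lemma to make your approach go through. (As a minor point, for \textsc{Connected Safe Set} the paper obtains the sharper bound $O^{*}(2^{k}k^{k})$ by fixing $\ell=1$, not the same $O^{*}(2^{k}k^{3k})$.)
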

 \begin{proof}
 
 As before, we assume that the input graph is connected, otherwise we solve the
 problem on each component and take the minimum. Suppose that the input graph
 $G=(V,E)$ has a safe set $O\subseteq V$ of size $k$. Since $G[O]$ is not
 necessarily connected, suppose that $G[O]$ has $\ell$ components, for $\ell\le
 k$, $O_1,\ldots, O_\ell$. The first step of our algorithm is to guess the value of
 $\ell$ ($k$ choices) and then guess the $\ell$ sizes of the components of
 $G[O]$ (at most $k^\ell\le k^k$ choices).  For all $i \in [\ell]$ we
 denote $k_i:=|O_i|$. In the remainder we assume that the algorithm guessed
 these values correctly, as we will repeat it for all possible values.
 
 The first phase of our algorithm is a branching process. We maintain $\ell$
 sets of vertices $S_1,\ldots,S_\ell$ with the intuitive meaning that
 $S_i\subseteq O_i$ for all $i\in [\ell]$. We denote
 $S:=\cup_{i=1}^{\ell} S_i$. Initially $S:=\emptyset$. If at any point we have
 $|S_i|>k_i$ for some $i$, the algorithm rejects this branch.
 
 We will say that a vertex $u\in V\setminus S$ is \emph{problematic} if it
 fulfills one of the following two properties: (i) the component of $G - S$
  that contains $u$ has size at least $k+1$, or (ii) there exists an
 $i\in [\ell]$ such that $N(u)\cap S_i\neq \emptyset$ and the component
 of $G - S$ that contains $u$ has size at least $k_i+1$. It is not
 hard to see from this description that finding a problematic vertex can be done
 in polynomial time. 
 
 The main branching step of our algorithm is now the following: we check if
 there exists a problematic vertex $u$. If it does, then we find a set of
 vertices $C$ such that $u\in C$, $C$ is connected in $G - S$, and $C$
 fulfills the following properties: (i) if the component of $u$ in $G - S$
  has size at least $k+1$, then $|C|=k+1$ or, (ii) if the component of $u$ in
 $G - S$ has size at least $k_i+1$ for some $i\in [\ell]$
 with $N(u)\cap S_i\neq \emptyset$, then $|C|=k_i+1$. It is not hard to see that
 this can be done in polynomial time. Now we produce $|C|\ell \le (k+1)k$ branches:
 for each vertex $w\in |C|$ we consider the case where we place $w$ in $S_i$ for
 $i\in [\ell]$. We then call the same algorithm recursively.
 
 To see the correctness of this branching procedure, we observe that if the
 branching is correct thus far, that is, if $S_i\subseteq O_i$ and $k_i=|O_i|$
 for all $i$, then $O\cap C\neq \emptyset$. In case (i) this is clear as we have
 a connected set of size $k+1$. In case (ii) this is a consequence of the fact
 that $C$ is adjacent to $S_i$ (therefore, to $O_i$), but has size strictly
 larger than the size we guessed (correctly) for $O_i$.
 
 The above search-tree process produces $O(k^{2k})$ branches, since we have at most $(k+1)k$
 choices in each branch, and the branching depth is at most $k$ (since in
 each branch we add a vertex to $S$).  The branching procedure terminates either
 because $|S_i|>k_i$ for some $i$ (in which case we reject), or because there
 are no more problematic vertices in the graph. Let us now explain how to
 complete the solution in this case.
 
 At this point we have $\ell$ sets $S_1,\ldots,S_\ell$ with the property that
 for all $i\in [\ell]$, every component of $G - S$ that is
 adjacent to $S_i$ has size at most $k_i$. However, we do not have a feasible
 solution yet because $G[S_i]$ is not necessarily connected for all $i$. We
 therefore construct $\ell$ instances of \textsc{Steiner Tree}: for each
 $i\in [\ell]$, we construct an instance where the set of terminals
 that must be connected is $S_i$ and we solve \textsc{Steiner Tree} on the graph
 $G - (\cup_{j\neq i} S_j)$ (in other words, for each set $S_i$ we
 find a Steiner Tree that connects $S_i$ without using any of the vertices of
 $S\setminus S_i$).  We execute the algorithm of \cite{Nederlof13}, which runs
 in time $O^{*}(2^t)$, where $t$ is the number of terminals. The algorithm
 returns $\ell$ sets $S_1',S_2',\ldots,S_\ell'$, such that for all
 $i\in [\ell]$ we have $S_i\subseteq S_i'$ and $G[S_i']$ is connected.
 If for some $i$ we have $|S_i'|>k_i$ or there is no solution (because deleting
 $S\setminus S_i$ disconnects $S_i$) we reject this branch.  Otherwise, for each
 $i$ such that $|S_i'|<k$, we augment $S_i'$ by adding to it arbitrary neighbors
 so that it remains connected and we have in the end $|S_i'|=k_i$. We return
 $S':=\cup_{i=1}^\ell S_i'$ as our solution.

 To see that the above algorithm is correct, we first argue that if we return a
 solution $S'$, then $S'$ clearly has size $k$, so we only need to explain why
 $S'$ is a safe set. Suppose for contradiction that $S'$ is not a safe set,
 therefore there exists a component of $G[S']$ which is adjacent to a component
 $C$ of $G - S'$ of larger size. In other words, there exists an
 $i\in [\ell]$ such that $G[S_i']$ has a neighbor in a component $C$
 of $G - S'$ with $|C|>|S_i'|=k_i$.  Let $S_i$ be the set of terminals
 on which we ran the $i$-th \textsc{Steiner Tree} procedure, and $S$ the set of
 vertices we had when the branching procedure stopped.  We will show that the
 graph contained a problematic vertex (with respect to $S_i$), and therefore the
 branching procedure could not have stopped.  Take a vertex $u_1\in C$ that has
 a neighbor in $S_i'$ and consider a shortest path in $G - S$ from
 $u_1$ to a vertex that has a neighbor in $S_i$.  Such a path exists, because in
 $G - (S\setminus S_i)$ all vertices of $S_i$ are in the same
 component (otherwise there would be no Steiner Tree connecting them), which is
 the same component that contains all vertices of $S_i'$, and $u_1$ has a
 neighbor in $S_i'$. Let $u_2$ be the last vertex of this path from $u_1$ to
 $S_1$. Then $u_2$ is problematic:  we have $N(u_2)\cap S_i\neq \emptyset$, and
 the component that $u_2$ belongs to in $G - S$ is at least as large
 as $C$.
 
 For the other direction, if $O$ exists and we have guessed the $k_i$ values
 correctly and $S_i\subseteq O_i$ for all $i$, then the \textsc{Steiner Tree}
 instances will all return a solution that we accept, as $O_i$ itself is a valid
 solution of the proper size. Therefore, if we reject in this phase it implies
 that no solution exists with the guessed properties.
 
 Finally, for the running time, we repeat the algorithm $O^*(k^k)$ times (for
 each value of $\ell$, and values of $k_i$), each repetition has a branching
 step with $O^*(k^{2k})$ leaves, and in each leaf we run $\ell$ times the
 \textsc{Steiner Tree} algorithm, each with at most $k$ terminals, therefore
 taking at most $O^*(2^k)$. Thus, the total running time is at most
 $O^*(2^k k^{3k})$.  
 \end{proof}
  
 When we set $\ell = 1$, the algorithm above will find a connected safe set of size at most $k$ (if one exists).
 In that case, we have a single execution of the branching algorithm with the search tree size $O(k^{k})$
 in which we execute the $O^*(2^k)$-time \textsc{Steiner Tree} algorithm for each leaf.
 Thus we have the following corollary.
\begin{corollary}
  \textsc{Connected Safe Set} can be solved in $O^{*}(2^{k} k^{k})$ time, where $k$ is the size of the solution.
\end{corollary}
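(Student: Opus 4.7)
The plan is to specialize the algorithm behind Theorem~\ref{thm_fpt_k} to the case $\ell=1$. A connected safe set is precisely a safe set whose unique $G[S]$-component equals $S$ itself, so by fixing $\ell:=1$ and $k_{1}:=k$ at the outset we eliminate the outer enumeration of Theorem~\ref{thm_fpt_k}, which contributed a factor of roughly $k\cdot k^{k}$ for guessing the number of components and the tuple of their sizes. Thus the guessing phase vanishes entirely.

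With $\ell=1$ fixed, I would run the same branching procedure, now maintaining a single growing set $S_{1}\subseteq V$. A vertex $u\notin S_{1}$ is problematic when the component of $G-S_{1}$ containing it has size at least $k+1$; if such a $u$ exists, we compute a connected set $C\subseteq V\setminus S_{1}$ of size exactly $k+1$ containing $u$ and branch by adding each vertex of $C$ in turn to $S_{1}$. Because $\ell=1$, there are only $|C|=k+1$ branches at each node (rather than $(k+1)\ell$), and the branching depth is bounded by $k$ since $|S_{1}|$ strictly increases along each path; the search tree therefore has $O(k^{k})$ leaves. Correctness of the branching step follows exactly as in Theorem~\ref{thm_fpt_k}: any target connected safe set $O$ of size at most $k$ that extends the current $S_{1}$ must intersect every connected $(k+1)$-vertex subset of $V\setminus S_{1}$, hence $O\cap C\ne\emptyset$.

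At a leaf the set $S_{1}$ may not induce a connected subgraph, but by construction every component of $G-S_{1}$ adjacent to $S_{1}$ has size at most $k$. We then invoke a single instance of \textsc{Steiner Tree} on $G$ with $S_{1}$ as the terminal set, using the $O^{*}(2^{|S_{1}|})$-time algorithm of~\cite{Nederlof13}; we accept if the returned tree has at most $k$ vertices, extending it by connected neighbors up to exactly $k$ if strictly smaller. The reverse direction of correctness mirrors the argument in Theorem~\ref{thm_fpt_k}: a hypothetical too-large component of $G-S_{1}'$ adjacent to the returned $S_{1}'$ would, by tracing a shortest path in $G-S_{1}$ back to a neighbor of $S_{1}$, exhibit a problematic vertex that should have triggered further branching. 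Multiplying the $O(k^{k})$ leaves by the $O^{*}(2^{k})$ cost of the single Steiner Tree call per leaf gives the claimed $O^{*}(2^{k}k^{k})$ bound. The only subtlety worth double-checking is that the connectivity requirement does not reintroduce a need to enumerate sizes, which it does not, precisely because every vertex of the solution belongs to the same component.
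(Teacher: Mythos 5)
Your proposal is correct and follows essentially the same route as the paper: fix $\ell=1$ and $k_{1}=k$, drop the outer guessing, observe the search tree now has $O(k^{k})$ leaves with a single $O^{*}(2^{k})$ \textsc{Steiner Tree} call at each, giving $O^{*}(2^{k}k^{k})$. The correctness argument you sketch is the same specialization of Theorem~\ref{thm_fpt_k} that the paper relies on.
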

 
 \newpage

\bibliography{safe}

\end{document}